\newif\ifarxiv
\newif\ifappendixmode
\newif\iflogicprograms
\newtheorem{example}{Example}
\newtheorem{theorem}{Theorem}
\newtheorem{definition}{Definition}
\newtheorem{proposition}{Proposition}
\newtheorem{corollary}{Corollary}
\newcommand{\sinceop}{\LTLs}
\newcommand{\beforeop}{\LTLcircleminus}
\newcommand{\prevop}{\LTLcircleminus}
\newcommand{\palwaysop}{\LTLsquareminus}
\newcommand{\peventuallyop}{\LTLdiamondminus}
  \newcommand{\ruleifinline}{\operatorname{\coloneqq}}
  \newcommand{\ruleifinline}{\operatorname{\coloneqq}}
\newcommand{\ruleif}{\,\ruleifinline\,}
\newcommand{\charsemigroup}{\operatorname{\mathbf{S}}}
  \newcommand{\wrule}{definition}
  \newcommand{\wbody}{body}
  \newcommand{\wprogram}{program}
  \newcommand{\wrule}{definition}
  \newcommand{\wbody}{body}
  \newcommand{\wprogram}{program}
\title{The Transformation Logics}
\author{
    Alessandro Ronca
    \affiliations
    University of Oxford
    \emails
    alessandro.ronca@cs.ox.ac.uk
}
\begin{document}

\maketitle

\begin{abstract}
  We introduce a new family of temporal logics designed to finely balance the
  trade-off between expressivity and complexity.
  Their key feature is
  the possibility of defining operators of a new kind that we call
  \emph{transformation operators}.
  Some of them subsume existing temporal operators, while others are entirely
  novel. 
  Of particular interest are transformation operators
  \emph{based on semigroups}. They enable logics to harness the richness of semigroup
  theory, and we show them to yield logics capable of
creating \emph{hierarchies of increasing expressivity and complexity}
which are non-trivial to characterise in existing logics.
  The result is a genuinely novel and yet unexplored 
  \emph{landscape of temporal logics},
  each of them with the potential of matching the 
  trade-off between expressivity and complexity required by specific
  applications.
\end{abstract}

\section{Introduction}

We introduce the \emph{Transformation Logics}, a new family of temporal logics
designed to finely balance the trade-off between expressivity and complexity.
Their key feature is the possibility of defining operators of a new kind that we
call \emph{transformation operators}. They capture patterns over sequences, and
they can be thought of as a generalisation of temporal operators.
The subclass of
transformation operators based on \emph{finite semigroups} is of particular
interest.
Such \emph{semigroup-like} operators suffice to capture all regular languages,
and remarkably
they allow for creating 
\emph{hierarchies of increasing expressivity and complexity} which
are non-trivial to define in existing logics. 
The base level of such hierarchies is obtained using the operator defined by
the \emph{flip-flop monoid}. The other levels are obtained introducing operators
based on \emph{simple groups}---the building blocks of all groups.
Simple groups have been systematically classified into a finite number of
families, cf.\ \cite{gorenstein2018classification}.
The classification provides a compass in the landscape of groups, and a roadmap
in the exploration of temporal logics, as it is made clear by the results in
this paper.

Our motivation arises from the usage of temporal logics in AI. 
They are used in \emph{reinforcement learning} to specify reward and
dynamics functions
\cite{bacchus1996rewarding,brafman2018ltlf,toroicarte2018teaching,camacho2019ltl,degiacomo2020bolts,degiacomo2020temporal};
in \emph{planning} for describing temporally-extended goals
\cite{torres2015polynomial,camacho2017nondeterministic,degiacomo2018automata,brafman2019planning,icaps2023bdffgs};
in \emph{stream reasoning} to express programs with the ability of referring
to different points of a stream of data
\cite{beck2018lars,ronca2022delay,walega2023stream}.

In the above applications, the required trade-off between expressivity and
complexity will depend on the case at hand. When the basic
expressivity of the star-free regular languages suffices, one can
employ logics such as Past LTL \cite{manna1991completing} and LTLf
\cite{degiacomo2013ltlf}.
In all the other cases, one needs to resort to more expressive logics.
The existing extensions of the above logics have the expressivity of all regular
languages, cf.\ ETL \cite{wolper1983temporal} and LDLf \cite{degiacomo2013ltlf}.
This is a big leap in expressivity,
which may incur an unnecessarily high computational complexity. 
We show next two examples where the required expressivity lies in fragments
between the star-free regular languages and all regular languages. 
These 
\ifarxiv
intermediate 
\fi
fragments can be precisely characterised in the Transformation Logics.

\begin{example}
  An agent is assigned a task that can be completed multiple times.
  We receive an update every minute telling us whether the agent has completed
  the task in the minute that has just elapsed.
  We need to detect whether the agent has completed the task at least
  once on every past day.
\end{example}

The example describes a periodic pattern, which is beyond the star-free
regular languages. It requires to count minutes modulo $24*60 = 1440$ in order
to establish the end of every day. This can be expressed in the Transformation
Logics using a transformation operator defined by the cyclic group
$C_{1440}$.
\ifarxiv
  Or alternatively using three transformation operators defined the
  cyclic groups $C_2$, $C_3$, and $C_5$, respectively. 
\fi
Cyclic group operators yield an ability to capture many useful periodic
patterns. At the same time, they belong to the special class of 
\emph{solvable group operators}, which enjoys good properties such as a
more favourable computational complexity compared to larger classes of
operators. 
\ifarxiv
\par
\fi
The next one is an example where solvable group operators do not
suffice, and we need to resort to \emph{symmetric group operators}.

\begin{example}
  A cycling race with $n$ participants takes place, and we need to keep track of
  the live ranking.
  At each step an overtake can happen, in which case it is communicated to us in
  the form $(i,i+1)$ meaning that the cyclist in position $i$ has overtaken the
  one in position $i+1$.
  We know the initial ranking, and we need to keep track of the live ranking.
\end{example}

The ranking in the example corresponds to the symmetric group $S_n$,
which is not solvable if $n \geq 5$.
The example can be specified in a Transformation Logic featuring a
transformation operator defined by the group $S_n$.

\paragraph{Summary of the contribution.}
We introduce the Transformation Logics, providing a formal syntax and
semantics.
Their main characteristic is the transformation operators. The operators are
very general, as we demonstrate through a series of concrete examples. 
We develop a systematic approach in defining operators, based on
semigroup theory and algebraic automata theory, cf.\
\cite{ginzburg,arbib1969theories,domosi2005algebraic}. This way we are able to
identify \emph{prime operators} that can capture all finite operators.
For them, we prove a series of expressivity and complexity results.
Regarding the \emph{expressivity}, we show there exists one operator, defined by
the \emph{flip-flop monoid}, which yields the expressivity of the
\emph{star-free regular languages}; as one keeps adding operators based on
\emph{cyclic groups} of prime order, the expressivity increases, up to capturing
all languages that can be captured using \emph{solvable group operators}; the
expressivity of all regular languages is
reached by adding the other prime operators, that can be defined by choosing
groups from \emph{the classification of finite simple groups}, cf.\
\cite{gorenstein2018classification}.
Regarding the \emph{complexity}, we focus on the \emph{evaluation problem}, and
we show three sets of results.
First, we show any Transformation Logic can be evaluated in \emph{polynomial
time}, whenever its operators can be evaluated in polynomial time. 
Second, for two notable families of operators, we show that
polynomial-time evaluation is possible even when they are represented compactly.
Third, we focus on the \emph{data complexity} of evaluation showing that it
corresponds
to the three circuit complexity classes $\textsc{AC}^0 \subsetneq \textsc{ACC}^0
\subseteq \textsc{NC}^1$ when we include (i)~only the flip-flop operator,
(ii)~also cyclic operators, and (iii)~all operators. 
Finally, we show how Past LTL formulas can be easily translated into the core
Transformation Logic featuring the flip-flop operator. 
\ifarxiv
  \paragraph{Appendix.}
  This paper is accompanied by an appendix which contains: 
  proofs of all our results; details of the examples; a discussion on how Past
  LTL can be extended with transformation operators.
\else
  \paragraph{Extended version.}
  Proofs of all our results as well as additional details on several aspects of
  the paper can be found in the extended version 
  \cite{extendedversion}.
\fi

\section{Preliminaries}

For $X$ a set, a \emph{transformation} is a function $f: X \to X$.
We write the identity function over any domain as $\mathit{id}$.
We denote the Boolean domain $\{ 0,1 \}$ by $\mathbb{B}$, the natural numbers
by $\mathbb{N}$, and the integer numbers by $\mathbb{Z}$.

\subsection{Formal Languages}
An \emph{alphabet} $\Sigma$ is a non-empty finite set of elements called
\emph{letters}. A \emph{string} over $\Sigma$ is a finite concatenation
$\sigma_1 \cdots \sigma_n$ of letters from $\Sigma$. A \emph{language} over
$\Sigma$ is a set of strings over $\Sigma$.
The \emph{regular languages} are the ones languages that are defined by regular
expressions, or equivalently by finite automata
\cite{kleene1956representation}.
The \emph{star-free regular languages} are the languages that are defined
by regular expressions without the Kleene star but with complementation,
or equivalently by a group-free finite automaton, cf.\ \cite{ginzburg}.

\subsection{Propositional Logic}

\par
\noindent
\textbf{Syntax.}
A \emph{propositional variable} is an element from a set $\mathcal{V}$ that we
consider as given. 
Typically we denote propositional variables by lowercase
Latin letters.
A \emph{propositional formula} is built out of propositional variables and the
\emph{Boolean operators} $\{ \neg, \land, \lor \}$. It is defined inductively as
a propositional variable or one of the following expressions: 
$\neg \alpha$, $\alpha \land \beta$, $\alpha \lor \beta$ where $\alpha$ and
$\beta$ are propositional formulas.
Additional Boolean operators may be defined, but it is not necessary as the
former operators are \emph{universal}, they suffice to express all Boolean
functions. 

\medskip
\par
\noindent
\textbf{Semantics.}
An \emph{interpretation $I$ for a propositional formula} is a subset of
the propositional variables occurring in the formula.
Intuitively, 
the fact that a variable appears in the intepretation means that
the variable stands for a proposition that is true.
\ifarxiv
An interpretation can also be seen as an assignment. 
\fi
An \emph{assignment} 
is a function $\nu : V \to \mathbb{B}$ from a set of propositional
variables $V$ to the Boolean domain 
$\mathbb{B} = \{ 0,1 \}$.
When $V = \{ v_1, \dots, v_n \}$,
we can also write an assingment $\nu$ as the map
$\langle v_1, \dots, v_n \rangle \mapsto \langle b_1, \dots, b_n \rangle$, with
the meaning that $\nu(v_i) = b_i$.
An interpretation $I$ corresponds to the assignment $\nu$ such that 
$\nu(a) = 1$ iff $a \in I$.
Then, the semantics of formulas is defined in terms of the following
satisfiability relation.
Given a formula $\alpha$ and an interpretation $I$ for $\alpha$,
the \emph{satisfiability relation} $I \models \alpha$ is defined following the
structural definition of formulas, for variables as
\begin{itemize}
  \item
    $I \models a$ iff $a \in I$,
\end{itemize}
and inductively for the other formulas as
\begin{itemize}
  \item
    $I \models \neg \alpha$ iff $I \not\models \alpha$,
  \item
    $I \models \alpha \lor \beta$ iff $I \models \alpha$ or $I \models \beta$,
  \item
    $I \models \alpha \land \beta$ iff $I \models \alpha$ and $I \models \beta$.
\end{itemize}
An assignment $\nu$ to variables $\{ a_1, \dots, a_m \}$ can be
seen as the conjunction $l_1 \land \dots \land l_m$ where $l_i = a_i$ if 
$\nu(a_i) = 1$ and $l_i = \neg a_i$ otherwise. 
This allows us to write $I \models \nu$.

\subsection{Semigroups and Groups}

A \emph{semigroup} is a non-empty set together with an
\emph{associative} binary
operation that combines any two elements $a$ and $b$ of the set to form a third
element $c$ of the set, written $c = (a \cdot b)$.
A \emph{monoid} is a semigroup that has an \emph{identity element} $e$, i.e., 
$(a \cdot e) = (e \cdot a) = a$ for every element $a$. 
The identity element is unique when it exists. 
A \emph{group} is a monoid where every element $a$ has an \emph{inverse}
$b$, i.e., $(a \cdot b) = (b \cdot a) = e$ where $e$ is the identity element.
For every element $a$ of a group, its inverse is unique and it is denoted by
$a^{-1}$.
A \emph{subsemigroup} (\emph{subgroup}) of a semigroup $S$ is a subset of $S$
that is a semigroup (group).
The \emph{order} of a semigroup is the number of elements.
For $S$ and $T$ semigroups, we write 
$ST = \{ s \cdot t \mid s \in S\text{, } t \in T\}$; we also write 
$S^1 = S$ and $S^n = SS^{n-1}$.
A semigroup $S$ is \emph{generated} by a semigroup $T$ if 
$S = \bigcup_n T^n$.
A \emph{homomorphism} from a semigroup $S$ to a semigroup $T$ is a mapping 
$\psi:S \to T$ such that $\psi(s_1 \cdot s_2) = \psi(s_1) \cdot \psi(s_2)$ for
every $s_1,s_2 \in S$.
If $\psi$ is bijective, we say that $S$ and $T$ are \emph{isomorphic}. 
Isomorphic semigroups are considered identical.
Let $G$ be a group and let $H$ be a subgroup of $G$. 
A \emph{right coset} of $G$ is
$g \cdot H = \{  g \cdot h \mid h \in H\}$ for $g \in G$, and a \emph{left
coset} of $G$ is $H \cdot g = \{  h \cdot g \mid h \in H\}$ for $g \in G$. 
\ifarxiv
Subgroup $H$ is \emph{normal} if its left and
right cosets coincide, i.e., $g \cdot H = H \cdot g$ for every $g \in G$.
\else
Subgroup $H$ is \emph{normal} if its left and right cosets coincide.
\fi
A group is \emph{trivial} if it is the singleton $\{ e \}$.
A \emph{simple group} is a group $G$ such that every normal subgroup of $G$ is
either trivial or $G$ itself. 
For $g,h \in G$, the \emph{commutator} of $g$ and $h$ is $g^{-1} \cdot h^{-1}
\cdot g \cdot h$.
The \emph{derived subgroup} of $G$ is the subgroup generated by its
commutators.
Setting $G^{(0)} = G$, 
the $n$-th derived subgroup
$G^{(n)}$ is the derived subgroup of $G^{(n-1)}$.
A group is \emph{solvable} if there exists $n$ such that $G^{(n)}$ is trivial.

A \emph{flip-flop monoid} is a three-element monoid $\{s,r,e\}$
where $(r \cdot s) = s$, $(s \cdot s) = s$,
$(r \cdot r) = r$, and $(s \cdot r) = r$.
All flip-flop monoids are isomorphic, and hence one refers to \emph{the}
flip-flop monoid.
A \emph{cyclic group} is a group that is isomorphic to the group $C_n$ of
integers 
$\{ 0, \dots, n-1 \}$ 
with modular addition 
$i \cdot j = i + j \operatorname{mod} n$.
Again, one refers to any cyclic group of order $n$ as \emph{the} cyclic group
$C_n$. Two relevant properties of cyclic groups are:
(i)~a cyclic group $C_n$ is simple iff $n$ is a prime number;
(ii)~every cyclic group is solvable.

\section{The Transformation Logics}

We introduce the Transformation Logics.
They are a propositional formalism, atoms are variables standing for
propositions that can be true or false.
The truth of some variables is given as input, whereas the meaning of other
variables is given by a \emph{definition}.
Definitions allow us to avoid nested expressions, and hence they aid
intelligibility in this context.
A definition features either a Boolean expression, the delay operator, or a
transformation operator. The delay operator is akin to the before operator from
Past LTL. A transformation operator has a domain of elements and a set of
transformations over the domain---a \emph{transformation} is a map from elements
of a domain to elements of the same domain.
At every step a transformation operator has an associated domain element to
which it applies a transformation based on the truth value of the operands. 
Then the evaluation value is a function of the current domain element. 
Each transformation corresponds to a specific functionality. For example,
setting a bit to one, or increasing a count. 
More intuition is given below in the Paragraph `Explanation' and later in
Section~\ref{sec:example-operators}.

\medskip
\par
\noindent
\textbf{Syntax.}
A \emph{static definition} is 
\begin{align*}
p \ruleif \alpha
\end{align*}
where $p$ is a propositional variable, and $\alpha$ is a propositional formula. 
A \emph{delay definition} is 
\begin{align*}
p \ruleif \operatorname{\mathcal{D}} q
\end{align*}
where $p$ and $q$ are propositional variables, and $\mathcal{D}$ is called the
\emph{delay operator}.
A \emph{transformation operator} $\mathcal{T}$ is a tuple  
$\langle X, T, \phi, \psi \rangle$ consisting of
a non-empty set $X$,
a non-empty set $T$ of transformations $\tau: X \to X$, 
a function $\phi: \mathbb{B}^m \to T$, and 
a function $\psi: X \to \mathbb{B}^n$.
We call $X$ the \emph{transformation domain};
we call $m$ and $n$ the \emph{input} and \emph{output arity}, respectively.
An operator is \emph{finite} if its transformation domain is finite---in which
case all its other components are necessarily finite.
A \emph{transformation definition} is 
\begin{align*}
p_1, \dots, p_n \ruleif \operatorname{\mathcal{T}}(q_1, \dots, q_m \mid x_0)
\end{align*}
where 
$\mathcal{T}$ is a transformation operator with input arity $m$ and output arity
$n$; $p_1,\dots,p_n$ and $q_1, \dots, q_m$ are variables; and $x_0 \in X$ is the
initial domain element.
A \emph{definition} is either a static definition, a delay definition, or a
transformation definition.
In a definition, the expression on the left of `$\ruleif$' is called
\emph{head}, and the expression on the right is called \emph{body}.
A \emph{program} is a finite set of definitions.
A \emph{query} is a pair $(P,q)$ consisting of a program $P$ and a variable $q$
occurring in $P$.
Programs are required to be \emph{nonrecursive}, i.e., to have an acyclic
dependency graph.
The \emph{dependency graph} of a program has one node for each variable
in the program, and it has a directed edge from $a$ to $b$ if there is a
definition where $a$ is in the body and $b$ is the head.
Programs are also required to define variables at most once, i.e., every
variable $p$ occurs at most once in the head of a definition; if $p$ occurs in
the head of a definition $d$, we say that $d$ \emph{defines} $p$.
In a given program $P$, a variable is a \emph{defined variable} if there is a
definition in $P$ that defines it, and it is an \emph{input variable}
otherwise.

For $\mathbf{T}$ a set of transformation operators,
the \emph{Transformation Logic $\mathcal{L}(\mathbf{T})$} is the set of
programs consisting of all static definitions, all delay definitions, and all
transformation definitions with operators from $\mathbf{T}$.

\medskip
\par
\noindent
\textbf{Semantics.}
An \emph{input} to a program $P$ is a finite non-empty sequence of subsets of
the input variables of $P$.
An \emph{assignment to a transformation definition} $d$ is an expression 
$d \mapsto x$ with $x$ a domain element of the operator of $d$.
The semantics of programs is defined in terms of the following
\emph{satisfiability relation}.
Given a program $P$, an input $I = I_1, \dots, I_\ell$ to $P$, and an index 
$t \in [1,\ell]$, we define the satisfiability relation $(P, I, t) \models E$
where $E$ is a propositional variable, a propositional formula, or an
assignment to a transformation definition.
We assume definitions are in the form given above; it allows us to refer
to the symbols mentioned there, e.g., symbol $p_i$ for the $i$-th head variable
of a transformation definition.
\begin{enumerate}[leftmargin=*]
  \item
    For $a$ an input variable,
    \begin{itemize}[leftmargin=*]
      \item
        $(P, I, t) \models a$ iff $a \in I_t$;
    \end{itemize}
  \item
    For $\alpha$ and $\beta$ formulas,
    \begin{itemize}[leftmargin=*]
      \item
        $(P, I, t) \models \neg \alpha$ iff $(P, I, t) \not\models \alpha$;
      \item
        $(P, I, t) \models \alpha \lor \beta$ iff $(P, I, t) \models \alpha$ or
        $(P, I, t) \models \beta$;
      \item
        $(P, I, t) \models \alpha \land \beta$ iff $(P, I, t) \models \alpha$
        and $(P, I, t) \models \beta$;
    \end{itemize}
  \item
    For $p$ a variable defined by a static definition,
    \begin{itemize}[leftmargin=*]
      \item
      $(P, I, t) \models p$ iff $(P, I, t) \models \alpha$;
  \end{itemize}
\item
  For $p$ a variable defined by a delay definition,
  \begin{itemize}[leftmargin=*]
    \item
      $(P, I, t) \models p$ iff $(P, I, t-1) \models q$;
  \end{itemize}
  \item
    For $d \mapsto x$ an assignment to a transformation definition $d$,
    \begin{itemize}[leftmargin=*]
      \item
        $(P,I,0) \models (d \mapsto x)$ iff $x = x_0$;
      \item
        $(P,I,t) \models (d \mapsto x)$ iff 
        \begin{itemize}[leftmargin=*]
          \item
            $(P,I,t-1) \models (d \mapsto x')$,
          \item
            $(P,I,t) \models (\langle q_1, \dots, q_m \rangle \mapsto \mu)$, and
          \item
            $\tau(x') = x$ with $\tau = \phi(\mu)$;
        \end{itemize}
    \end{itemize}
\item
  For $p_i$ a variable defined by a transformation definition $d$,
  \begin{itemize}[leftmargin=*]
    \item
  $(P,I,t) \models p_i$ iff 
    \begin{itemize}[leftmargin=*]
      \item
        $(P,I,t) \models (d \mapsto x)$, and
      \item
        $\psi(x) = \langle b_1, \dots, b_i, \dots, b_n \rangle$
        with $b_i = 1$.
    \end{itemize}
  \end{itemize}
\end{enumerate}

\medskip
\par
\noindent
\textbf{Explanation.}
The index $t$ can be thought of as a time point, ranging over the positions of
the given input $I$.
Points~1 and~2 follow the standard semantics of propositional formulas,
evaluated with respect to the assignment $I_t$. 
Points~3--5 define the semantics of definitions, relying on the auxiliary Point~6.
Point~3 is the semantics of static definitions. It says that the truth value of
a variable $p$ defined by a static definitions is the truth value of the
propositional formula $\alpha$ corresponding to the body of the definition,
evaluated at the same time point.
Point~4 is the semantics of delay definitions. It says that the truth value of a
variable $p$ defined by a delay definition is the truth value of the
propositional variable $q$ occurring in the body the definition, evaluated
at the \emph{previous} time point.
Point~5 describes the element $x$ that is currently associated with a
transformation definition.
For $t=0$, the element is $x_0$, the initial element specified in the
definition. 
For $t > 0$, the element $x$ is determined as follows.
We pick an assignment $\langle q_1, \dots, q_m \rangle \mapsto \mu$ for the
body variables of the definition. Specifically, 
$(P,I,t) \models ( \langle q_1, \dots, q_m \rangle \mapsto \mu)$ with 
$\mu = \langle b_1, \dots, b_m \rangle$ means that 
$(P,I,t) \models q_i$ if $b_i = 1$ and $(P,I,t) \not\models q_i$ otherwise.
The assignment determines the transformation $\tau = \phi(\mu)$,
which in turn determines the next element $x = \tau(x')$ from the
previous one $x'$.
Point~6 defines the semantics of transformation definitions. It specifies when
variables $p_1, \dots, p_n$ defined by a transformation definition $d$ are true.
The semantics is defined considering a single variable $p_i$ at a time,
considered as part of the former list.
There is an element $x$ of the transformation domain of 
$\mathcal{T} = \langle X, T, \phi, \psi \rangle$ that is
currently associated with the definition $d$.
Namely, the condition $(P,I,t) \models (d \mapsto x)$ holds.
Hence, the assignment $\langle b_1, \dots, b_n \rangle$ to the variables 
$p_1, \dots, p_n$ is given by $\psi(x)$.

\subsection{Examples of Operators}
\label{sec:example-operators}

The mechanism to define transformation operators allows for a great variety of
operators.
In this section we present several examples of transformation operators, showing
that one can easily capture existing operators from the literature or design
novel operators to capture known patterns on sequences.

\paragraph{Temporal Operators.}
We can define operators in the style of the temporal operators from
Past LTL, cf.\ \cite{manna1991completing}. For instance, we can define the
operator
\begin{align*}
  \peventuallyop = \langle \mathbb{B}, T, \phi, \mathit{id} \rangle,
\end{align*}
where $T$ consists of the transformations $\mathit{set}(x) = 1$ and
$\mathit{id}$, and the function $\phi$ is defined as $\phi(0) = \mathit{id}$ 
and $\phi(1) = \mathit{set}$.
Then we can write a definition 
\begin{align*}
  p \ruleif \peventuallyop(a \mid 0),
\end{align*}
which defines $p$ as true when $a$ has happened. Here the only meaningful choice
of the initial transformation element is $0$, and hence it can be omitted.
Thus we can write the same definition as
\begin{align*}
  p \ruleif \peventuallyop a,
\end{align*}
with the understanding that it corresponds to the one above.
Other temporal operators can be introduced in a similar way. 

\paragraph{Threshold Counter Operators.}
The \emph{threshold counter operator} with threshold value $n$ is 
\begin{align*}
  \mathcal{T}_n = \langle \mathbb{N}, T, \phi, \psi \rangle,
\end{align*}
where $T$ consists of the transformations $\mathit{inc}$ and $\mathit{id}$, 
with $\mathit{inc}(x) = x+1$;
the function $\phi$ is defined as
$\phi(1) = \mathit{inc}$ and $\phi(0) = \mathit{id}$; and 
the function $\psi$ is defined as $\psi(x) = 1$ if $x \geq n$ and $\psi(x) = 0$
otherwise.
The operator allows one to check whether a given condition has occurred at least 
$n$ times.
Notably, we can define the operator equivalently as a finite operator by
replacing the set of all natural numbers $\mathbb{N}$ with the finite set
$[0,n]$ and modifying the increment transformation as
$\mathit{inc}(x) = \min(n, x+1)$.
They are equivalent because $\psi$ will not distinguish integers greater
than $n$.

\begin{example}
  An agent must collect at least 30 units of stone, and at least
  115 units of iron given that 13 have already been collected. 
  When it has collected a sufficient number of units, it can
  deliver and get rewarded.
  The reward function is described by the query $(P, \mathit{reward})$ where $P$
  consists of the following definitions:
  \begin{align*}
    \mathit{enoughStone} \ruleif & \mathcal{T}_{30}(\mathit{stone} \mid 0),
    \\
    \mathit{enoughIron} \ruleif & \mathcal{T}_{115}(\mathit{iron} \mid 13),
    \\
    \mathit{successfulDelivery} \ruleif & \mathit{enoughStone} \land
    \mathit{enoughIron} 
    \\
    & \land \mathit{delivery},
    \\
    \mathit{alreadyDelivered} \ruleif & \peventuallyop
    \mathit{successfulDelivery},
    \\
    \mathit{notAlreadyDelivered} \ruleif & \neg \mathit{alreadyDelivered},
    \\
    \mathit{reward} \ruleif & \mathit{delivery} \land
    \mathit{notAlreadyDelivered}.
  \end{align*}
  The proposition $\mathit{reward}$ holds true at the first time point when the
  agent succeeds in a delivery.
\end{example}

\paragraph{Parity Operator.}
The \emph{parity operator} is 
\begin{align*}
  \mathcal{P} = \langle \mathbb{B}, T, \phi, \mathit{id} \rangle,
\end{align*}
where $T$ consists of the identity function $\mathit{id}$ and the Boolean
negation function $\neg$; and
the function $\phi$ is defined as $\phi(0) = \mathit{id}$ and $\phi(1) = \neg$.
The operator checks whether the input variable has been true an even number of
times. It is exemplified by the program 
\begin{align*}
  \mathit{even} & \ruleif \mathcal{P}(a \mid 0),
  \\
  \mathit{odd} & \ruleif \neg \mathit{even}.
\end{align*}
which defines whether $a$ has happened an even or odd number of times.

\paragraph{Metric Temporal Operators.}
In the style of metric temporal logic \cite{koymans1990mtl},
we can define operators such as one that checks whether something happened in
the last $k$ steps---with time isomorphic to the naturals rather than to the
reals as in the original metric temporal logic. The aforementioned operator is
defined as 
\begin{align*}
  \peventuallyop_k = \langle\mathbb{Z}, T, \phi, \psi \rangle,
\end{align*}
where the set $T$ has transformations $\mathit{set}$ and $\mathit{dec}$ defined
as $\mathit{set}(x) = k$ and $\mathit{dec}(x) = x-1$;
the function $\phi$ is defined as 
$\phi(1) = \mathit{set}$ and
$\phi(0) = \mathit{dec}$;
and the function $\psi$ is defined as $\psi(x) = 1$ if $x > 0$ and $\psi(x) = 0$
otherwise. The operator can be equivalently defined as a finite
operator by replacing $\mathbb{Z}$ with $[0,k]$, and definining 
$\mathit{dec}(x) = \max(0,x-1)$.

\paragraph{Operators with infinite transformation domain.}
While all the operators above can be defined as finite operators,
one can also include operators that require an infinite transformation domain.
For instance, the operator
\begin{align*}
  \mathrm{sameNum} = \langle \mathbb{Z}, T, \phi, \psi \rangle,
\end{align*}
where $T$ consists of $\mathit{inc}(x) = x+1$, $\mathit{id}(x) = x$, and 
$\mathit{dec}(x) = x-1$;
the function $\phi$ is defined as
$\phi(0,0) = \phi(1,1) = \mathit{id}$, 
$\phi(1,0) = \mathit{inc}$,
$\phi(0,1) = \mathit{dec}$;
and the function $\psi$ is defined as $\psi(x) = 1$ iff $x = 0$.
When the operator is used in a definition such as
\begin{align*}
  p \ruleif \mathrm{sameNum}(a,b \mid 0),
\end{align*}
we have that variable $p$ is true iff $a$ and $b$ have been true the same number
of times.
The operator can be used to recognise the Dyck language of
balanced parentheses, which is not regular. 
The existence of an equivalent finite operator would imply the existence of a
finite automaton, and hence regularity of the language.

\subsection{Finite Semigrouplike Operators}

We present a principled way to define operators in terms of finite
semigroups.

\begin{definition} \label{definition:semigrouplike_operator}
  Let us consider 
  a finite semigroup $S = (X, {}\cdot{})$, 
  a surjective function $\phi: \mathbb{B}^m \to X$, and 
  an injective function $\psi: X \to \mathbb{B}^n$.
  They define the transformation operator 
  $\langle X, T, \phi, \psi \rangle$ where
  $T$ consists of each transformation $y(x) = x \cdot y$ for $y \in X$.
  We call it a \emph{semigrouplike operator}.
\end{definition}

Intuitively, $\phi$ is a binary decoding of the set $X$, and $\psi$ is a binary
encoding.
Note that $y$ is seen both as an element of $X$ and as a function $y : X \to X$.
From now on we assume that for every set $X$ such an encoding and decoding is
fixed. Any choice will be valid for our purposes.
Thus, we simply say that a semigroup defines a semigrouplike operator,
without mentioning the functions $\phi$ and $\psi$ explicitly.

\begin{definition}[Prime operators]
  \label{definition:prime_operators}
  The \emph{flip-flop operator} is the transformation operator defined by the
  flip-flop monoid.
  An operator is a \emph{(simple) group operator} if it is defined by a (simple)
  finite group.
  The \emph{prime operators} are the flip-flop operator and the simple group
  operators. 
\end{definition}

We will focus in particular on the flip-flop operator and on operators defined
by cyclic groups.
For these two operators we provide an explicit definition, which requires us to
commit to a choice of the encoding/decoding functions.
The definitions we provide are improved with respect to the ones following from
a direct application of Definition~\ref{definition:semigrouplike_operator}. In
particular, we omit one element from the transformation domain of the flip-flop
operator, since it would be redundant.
\ifarxiv
In appendix we discuss how to derive the operators by direct application of 
Definition~\ref{definition:semigrouplike_operator}.
\fi

\begin{definition} \label{definition:flipflop_operator}
  The \emph{flip-flop operator} is
  \begin{align*}
    \mathcal{F} = \langle \mathbb{B}, T, \phi, \mathit{id} \rangle,
  \end{align*}
  where $T$ consists of the transformations $\mathit{set}, \mathit{reset},
  \mathit{read}$ defined as
  \begin{align*}
      \mathit{set}(x) = 1,
    \quad
      \mathit{reset}(x) = 0,
    \quad
      \mathit{read}(x) = x,
  \end{align*}
  and the function $\phi$ is defined as
  \begin{align*}
    \phi(0,0) = \mathit{read},\;
    \phi(1,0) = \phi(1,1) = \mathit{set},\;
    \phi(0,1) = \mathit{reset}.
  \end{align*}
\end{definition}

The flip-flop operator corresponds to the flip-flop from digital circuits---it
corresponds to an \emph{SR latch}, where input $11$
is not allowed though, cf.\ \cite{roth2004fundamentals}.
The operator allows us to specify a flip-flop with a definition as
\begin{align*}
  \mathit{storedBit} \ruleif 
  \mathcal{F}(\mathit{writeOne},\mathit{writeZero}).
\end{align*}
Therefore the logic $\mathcal{L}(\mathcal{F})$ can be employed as
a specification language for digital circuits with logic gates and flip-flops.
Next we introduce the cyclic operators.

\begin{definition} \label{definition:cyclic_operator}
  The \emph{cyclic operator of order $n$} is
  \begin{align*}
    \mathcal{C}_n = \langle [0,n-1], T, \phi, \psi \rangle,
  \end{align*}
  where the transformations are 
  $T = \{ \mathit{inc}_i \mid i \in [0,n-1] \}$ defined as
  \begin{align*}
    \mathit{inc}_i(x) = (x + i)\, \operatorname{mod} n,
  \end{align*}
  the function $\phi$ is defined as
  \begin{align*}
  \phi(b_1, \dots, b_m) = \mathit{inc}_i,
  \end{align*}
  where $m$ is the minimum number of bits required to represent $n$,
  and $i$ is the minimum between $n-1$ and the number whose binary
  representation is $b_1 \dots b_m$;
  finally, the function $\psi(x)$ yields the binary representation of $x$.
\end{definition}

When a cyclic operator is used in a definition as
\begin{align*}
  p_1, \dots, p_m \ruleif \mathcal{C}_n(0, \dots, 0, a \mid 0),
\end{align*}
variables $p_1, \dots, p_m$ provide the binary representation of the number of
times $a$ has been true modulo $n$.
Note that the parity operator introduced earlier coincides with
$\mathcal{C}_2$.
\ifarxiv
Also note that we have chosen an efficient binary encoding/decoding scheme.
An alternative could be to choose a function $\psi$ that yields the
one-hot encoding of the counter value. This would require $m=n$, instead of $m$
logarithmic in $n$.
\fi
Next we characterise the prime cyclic operators.

\begin{proposition}
  A cyclic operator $\mathcal{C}_n$ is a prime operator iff $n$ is a prime
  number.
\end{proposition}

The other prime operators are defined by finite simple groups as
found in \emph{the classification of finite simple groups}, cf.\
\cite{gorenstein2018classification}. In addition to the infinite family of
cyclic groups of prime order, the classification also includes the two infinite
families of \emph{alternating groups} and groups of \emph{Lie type}, along with 
the 27 \emph{sporadic} groups.

\section{Expressivity Results}

We show expressivity results for Transformation Logics featuring
finite operators, with a focus on semigroup-like operators.
We start by defining the notion of expressivity for a Transformation
Logic, in terms of formal languages.

\begin{definition}
  Consider a program $P$ with input variables 
  $V = \{ p_1, \dots, p_n \}$.
  Every letter $\langle b_1, \dots, b_n \rangle$ of the alphabet 
  $\mathbb{B}^n$ defines an assignment $\nu$ to the variables in $V$, as
  $\nu(p_i) = b_i$.
  Then, every word $w$ over $\mathbb{B}^n$ defines an input $I_w$ to $P$.
  A query $(P,q)$ \emph{accepts} a word $w$ over $\mathbb{B}^n$ iff 
  $(P, I_w, |w|) \models q$;
  and it \emph{recognises} a language $L$ over $\mathbb{B}^n$ if it accepts
  exactly the words in $L$.  
\end{definition}

\begin{definition}
  The \emph{expressivity} of a logic $\mathcal{L}$ is the set of
  languages recognised by its queries.
  A logic $\mathcal{L}_1$ is \emph{less expressive} than a logic
  $\mathcal{L}_2$, written $\mathcal{L}_1 \sqsubset \mathcal{L}_2$, if the
  expressivity of $\mathcal{L}_1$ is properly included in the expressivity of
  $\mathcal{L}_2$.
\end{definition}

First, we establish the expressivity when all finite operators are included.

\begin{restatable}{theorem}{theoremexpressivitytransformationlogics}
  \label{theorem:expressivity_transformation_logics}
  For $\mathbf{A}$ the set of all finite operators,
  the expressivity of 
$\mathcal{L}(\mathbf{A})$ is the regular languages.
\end{restatable}

The result follows from the fact that finite operators capture finite automata;
conversely, programs with finite operators can be mapped to a composition of
finite automata. In particular to a so-called \emph{cascade composition} of
automata. This characterisation allows us to employ 
\emph{algebraic automata theory}, cf.\
\cite{ginzburg,arbib1969theories,domosi2005algebraic}, in proving that prime
operators suffice to capture regular languages.

\begin{restatable}{theorem}{theoremexpressivitytransformationlogicsprime}
  \label{theorem:expressivity_transformation_logics_prime}
  For $\mathbf{P}$ the set of all prime operators,
  the expressivity of the logic $\mathcal{L}(\mathbf{P})$ is the regular
  languages.
\end{restatable}

The result is obtained by showing that every transformation definition with a
finite transformation operator can be captured by a program with prime operators
only.
The prime operators to use are suggested by the \emph{prime decomposition
theorem} for finite automata \cite{krohn1965rhodes}.
For star-free regular languages, we obtain the following specialised result.

\begin{restatable}{theorem}{theoremexpressivitykrlogicsstarfree}
  \label{theorem:expressivity_kr_logics_starfree}
  For $\mathcal{F}$ the flip-flop operator,
  the expressivity of the logic $\mathcal{L}(\mathcal{F})$ is the star-free
  regular languages.
\end{restatable}

Beyond star-free, we have nameless fragments of the regular languages. 
Here we start their exploration. First, to go beyond star-free it suffices to
introduce group operators.

\begin{restatable}{theorem}{theoremgroupsaddexpressivity}
  \label{theorem:groups_add_expressivity}
  For any non-empty set $\mathbf{G}$ of group operators,
  the logic $\mathcal{L}(\mathcal{F}, \mathbf{G})$ is strictly
  more expressive than $\mathcal{L}(\mathcal{F})$.
\end{restatable}

Next we focus on cyclic operators. We show them to yield a core of the
Transformation Logics with favourable properties.
First, cyclic operators along with the flip-flop operator 
form a canonical and universal set of operators for the logic 
$\mathcal{L}(\mathcal{F},\mathbf{S})$, where $\mathbf{S}$ is the set of all
solvable group operators.
They are canonical and universal for $\mathcal{L}(\mathcal{F},\mathbf{S})$ in
the same way the operators $\{ \land, \neg \}$ are canonical and
universal for propositional logic.
We state universality and then canonicity.

\begin{restatable}[Universality]{theorem}{theoremexpressivityofcyclicoperators}
  \label{theorem:expressivity_of_cyclic_operators}
  For $\mathbf{C}$ the set of all cyclic operators of prime order, 
  and $\mathbf{S}$ any set of solvable operators, 
  the expressivity of $\mathcal{L}(\mathcal{F}, \mathbf{C})$ includes the
  expressivity of $\mathcal{L}(\mathcal{F}, \mathbf{S})$.
\end{restatable}

\begin{restatable}[Canonicity]{theorem}{theoremexpressivitycontainment}
  \label{theorem:expressivity_containment}
  Given two sets $\mathbf{C}_1$ and $\mathbf{C}_2$ of cyclic
  operators of prime order, the logics 
  $\mathcal{L}(\mathcal{F}, \mathbf{C}_1)$ and $\mathcal{L}(\mathcal{F},
  \mathbf{C}_2)$ have the same expressivity if and only if\/ 
  $\mathbf{C}_1 = \mathbf{C}_2$.
\end{restatable}

The canonicity result implies the existence of infinite expressivity hierarchies
such as the following one, if we note that every $\mathcal{C}_p$ is a prime
operator when $p$ is a prime number.
\begin{corollary}
  $\mathcal{L}(\mathcal{C}_2) \sqsubset \mathcal{L}(\mathcal{C}_2,
  \mathcal{C}_3) \sqsubset \mathcal{L}(\mathcal{C}_2, \mathcal{C}_3,
  \mathcal{C}_5) \sqsubset \cdots$.
\end{corollary}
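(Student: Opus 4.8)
The plan is to derive the corollary from Theorems~\ref{th:same-expressivity} and~\ref{th:incomparable-expressivity} together with the basic fact that adding operators can only increase expressivity. First I would note that each $\mathcal{C}_n$ is a prime operator whose core is the canonical $n$-counter, which is the grouplike semiautomaton of the cyclic group $C_n$; since $2,3,5,\dots$ are prime, each $C_n$ is a simple group, so each $\mathcal{C}_n$ is indeed a prime operator in the sense of the definition. Write $p_1 < p_2 < \cdots$ for the primes and $\mathbf{A}_k = \{\mathcal{C}_{p_1},\dots,\mathcal{C}_{p_k}\}$, so the claimed chain is $\krlogic(\mathbf{A}_1) \sqsubset \krlogic(\mathbf{A}_2) \sqsubset \cdots$.

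For the inclusion $\krlogic(\mathbf{A}_k) \sqsubseteq \krlogic(\mathbf{A}_{k+1})$, I would simply observe that every program of $\krlogic(\mathbf{A}_k)$ is literally a program of $\krlogic(\mathbf{A}_{k+1})$, since $\mathbf{A}_k \subseteq \mathbf{A}_{k+1}$; hence the expressivity is monotone in the operator set. The real content is strictness, i.e.\ $\krlogic(\mathbf{A}_{k+1}) \not\sqsubseteq \krlogic(\mathbf{A}_k)$. Here the idea is to isolate the single new operator $\mathcal{C}_{p_{k+1}}$ and apply Theorem~\ref{th:incomparable-expressivity} with the singleton sets $\mathbf{A}_1' = \{\mathcal{C}_{p_{k+1}}\}$ and $\mathbf{A}_2' = \mathbf{A}_k$. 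The hypothesis of that theorem is $\mathbf{A}_1' \not\simeq \mathbf{A}_2'$, which for singletons against finite sets should be read as: no operator in $\mathbf{A}_1'$ is similar to any operator in $\mathbf{A}_2'$. This amounts to showing the core of $\mathcal{C}_{p_{k+1}}$ is not isomorphic to the core of any $\mathcal{C}_{p_i}$ with $i \le k$, i.e.\ the grouplike semiautomaton of $C_{p_{k+1}}$ is not isomorphic to that of $C_{p_i}$. Since an isomorphism of grouplike semiautomata induces an isomorphism of the underlying groups (same number of states forces $|C_{p_{k+1}}| = p_{k+1} = p_i = |C_{p_i}|$, impossible for distinct primes), this is immediate. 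Then Theorem~\ref{th:incomparable-expressivity} gives $\krlogic(\{\mathcal{C}_{p_{k+1}}\}) \not\sqsubseteq \krlogic(\mathbf{A}_k)$, and since $\krlogic(\{\mathcal{C}_{p_{k+1}}\}) \sqsubseteq \krlogic(\mathbf{A}_{k+1})$ by monotonicity, we conclude $\krlogic(\mathbf{A}_{k+1}) \not\sqsubseteq \krlogic(\mathbf{A}_k)$, hence $\krlogic(\mathbf{A}_k) \sqsubset \krlogic(\mathbf{A}_{k+1})$.

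The main obstacle I anticipate is purely a matter of making the notation in Theorem~\ref{th:incomparable-expressivity} applicable: that theorem as stated compares two sets of prime operators via $\simeq$, and one must be slightly careful that "$\mathbf{A}_1 \not\simeq \mathbf{A}_2$" is the right condition when one set is a singleton and is contained-up-to-similarity question against a larger set — the clean way is to instantiate with $\mathbf{A}_1 = \{\mathcal{C}_{p_{k+1}}\}$ and with the proof of that theorem in mind (it only needs one operator $\mathcal{A} \in \mathbf{A}_1 \setminus \mathbf{A}_2$, up to similarity, and here $\mathcal{C}_{p_{k+1}}$ is not similar to anything in $\mathbf{A}_k$). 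Everything else — monotonicity of expressivity under enlarging the operator set, and the distinctness of the cyclic groups of distinct prime orders — is routine.
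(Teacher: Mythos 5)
Your proposal is correct and matches the paper's intent exactly: the paper offers no separate proof for Corollary~\ref{corollary:hierarchy}, presenting it as an immediate consequence of Theorem~\ref{th:incomparable-expressivity}, and your argument (monotonicity of expressivity under enlarging the operator set, plus non-similarity of $\mathcal{C}_{p_{k+1}}$ to the operators of $\mathbf{A}_k$ because cyclic groups of distinct prime orders are non-isomorphic) is precisely the intended derivation. Your remark about reading $\not\simeq$ elementwise for sets is well taken and consistent with how the proof of Theorem~\ref{th:incomparable-expressivity} actually uses the hypothesis, namely by picking a single operator in $\mathbf{A}_1$ not similar to anything in $\mathbf{A}_2$.
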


It is worth noting that this form of canonicity does not hold for prime
operators in general.
\begin{restatable}{theorem}{theoremnoncanonicity}
  \label{theorem:non_canonicity}
  There are two sets $\mathbf{P}_1 \subset \mathbf{P}_2$ of prime
  operators such that
      $\mathcal{L}(\mathbf{P}_1)$ and 
      $\mathcal{L}(\mathbf{P}_2)$ have the same expressivity.
\end{restatable}

Combining our expressivity results together, we obtain the following hierarchy
theorem.
\begin{restatable}{theorem}{theoremhierarchy}
  \label{theorem:hierarchy}
  For $\mathbf{S}$ all solvable group operators,
  and $\mathbf{P}$ all prime operators,
  the following infinite hierarchy of expressivity holds:
  \begin{align*}
    \mathcal{L}(\mathcal{F}) 
    \!\sqsubset\!
    \mathcal{L}(\mathcal{F}, \mathcal{C}_2) 
    \!\sqsubset\!
    \mathcal{L}(\mathcal{F}, \mathcal{C}_2, \mathcal{C}_3) 
\!\sqsubset\!
    \cdots 
    \!\sqsubset\!
    \mathcal{L}(\mathcal{F}, \mathbf{S})
    \!\sqsubset\!
    \mathcal{L}(\mathbf{P}).
  \end{align*}
\end{restatable}

The theorem provides our current picture of the expressivity of the
Transformation Logics.
At the bottom of the hierarchy we have 
$\mathcal{L}(\mathcal{F})$ with the expressivity of the star-free regular
languages.
At the top of the hierarchy we have 
$\mathcal{L}(\mathbf{P})$ with the expressivity of all regular languages.
Between them we have infinitely-many logics capturing distinct fragments of the
regular languages.

\section{Complexity Results}

We study the complexity of the evaluation problem for the Transformation Logics.
\begin{definition}
  The \emph{evaluation problem} of a Transformation Logic $\mathcal{L}$ is the
  problem to decide, 
  given a query $(P,q)$ with $P \in \mathcal{L}$, and an input 
  $I = I_1, \dots, I_\ell$\/ for $P$, whether it holds that 
  $(P,I,\ell) \models q$.
\end{definition}

To study the complexity we need to assume a representation for the operators,
which determines the \emph{size of an operator}. The choice of a representation
for operators in general is beyond the scope of this paper. We discuss two
concrete cases below.

\begin{definition} \label{definition:polytime_operator}
  A family $\mathbf{T}$ of operators is \emph{polytime} if, for every operator 
  $\langle X, T, \phi, \psi \rangle \in \mathbf{T}$ with
  $\phi: \mathbb{B}^m \to T$ and $\psi: X \to \mathbb{B}^n$, it holds that, 
  for every $x \in X$ and every $\mu \in \mathbb{B}^m$,
  the value $\psi(\tau(x))$ with $\tau = \phi(\mu)$ can be computed in time
  polynomial in the size of the operator.
\end{definition}

\begin{restatable}{theorem}{theoremcomplexity}
  \label{theorem:complexity}
  For any (possibly infinite) set $\mathbf{T}$ of polytime operators,
  the evaluation problem of $\mathcal{L}(\mathbf{T})$ is in $\textsc{PTime}$.
  Furthermore, there exists a set $\mathbf{H}$ of polytime operators such that
  the evaluation problem of $\mathcal{L}(\mathbf{H})$ is
  $\textsc{PTime}$-complete.
\end{restatable}

We argue the upper bound applies to every finite set of finite operators,
including prime operators. On the contrary, infinite sets of finite
operators may not be polytime.

\begin{restatable}{lemma}{lemmacomplexityfiniteoperators}
  \label{lemma:complexity_finite_operators}
  Every finite set of finite operators is polytime.
  There exists a set of finite operators that is not polytime.
\end{restatable}

\begin{restatable}{theorem}{theoremcomplexityfiniteoperators}
  \label{theorem:complexity_finite_operators}
  For any finite set $\mathbf{T}$ of finite operators,
  the evaluation problem of $\mathcal{L}(\mathbf{T})$ is in $\textsc{PTime}$.
\end{restatable}

Next we focus on threshold counter operators $\mathcal{T}_n$ and cyclic
operators $\mathcal{C}_n$. We denote the complete families as 
$\{ \mathcal{T}_n \}$ and $\{ \mathcal{C}_n \}$.
Notably, such operators can be represented compactly. 

\begin{definition}
  A representation for 
  the family of threshold counter operators $\{ \mathcal{T}_n \}$ is said to be 
  \emph{compact} if the size of the representation of $\mathcal{T}_n$ is 
  $O(\log n)$.
  Similarly for the family of cyclic operators $\{ \mathcal{C}_n \}$.
\end{definition}

A compact representation can be obtained by encoding the symbols
$\mathcal{T}$ and $\mathcal{C}$ with a constant number of bits, and the index
$n$ in binary with a logarithmic number of bits.

\begin{restatable}{proposition}{propositioncompactoperators}
  \label{proposition:compact_operators}
  The threshold-counter operators $\{ \mathcal{T}_n \}$ and the  
  cyclic operators $\{ \mathcal{C}_n \}$
  admit a compact representation.
\end{restatable}

\begin{restatable}{lemma}{lemmacomplexitycounters}
  \label{lemma:complexity_counters}
  The families of operators $\{ \mathcal{T}_n \}$ and $\{ \mathcal{C}_n \}$ are
  polytime, even when represented compactly.
\end{restatable}

Overall we obtain that polynomial time evaluation is possible for every
Transformation Logic that includes any finite number of finite operators, along
with the threshold-counter and cyclic operators.

\begin{restatable}{theorem}{theoremoverallcomplexity}
  \label{theorem:overall_complexity}
  For any finite set $\mathbf{A}$ of finite operators,
  the evaluation problem of 
  $\mathcal{L}(\mathbf{A}, \{ \mathcal{T}_n \}, \{ \mathcal{C}_n \})$
  is in $\textsc{PTime}$ even under compact representation of 
  the operators $\{ \mathcal{T}_n \}$ and $\{ \mathcal{C}_n \}$.
\end{restatable}

\subsection{Constant-Depth and Data Complexity}

We study the complexity of evaluation of programs having a constant depth.
It implies \emph{data complexity} results. 
Specifically, all membership complexity results for constant-depth programs
imply also membership in data complexity, when the program is fixed and the
size of the input to the program is arbitrary.
Data complexity measures how evaluation of a fixed program scales with the size
of the input \cite{vardi1982complexity}.
We first define the depth of a program, and corresponding classes of
constant-depth programs.

\begin{definition}
  Consider a program $P$.
  The \emph{depth} of an input variable of $P$ is zero.
  The \emph{depth} of a variable defined by a static definition $d \in P$ is the
  maximum
  depth of a variable in the body of $d$ plus the depth of the parse-tree of
  the body of $d$.
  The \emph{depth} of a variable defined by a delay or transformation definition 
  $d \in P$ is the maximum depth of a variable in the body of $d$ plus one.
  The \emph{depth of $P$} is the maximum depth of a variable in $P$.
\end{definition}

\begin{definition}
  For any set of transformation operators $\mathbf{T}$, and any depth $k$,
  the Transformation Logic $\mathcal{L}(\mathbf{T} \,|\, k)$ is the subset of 
  $\mathcal{L}(\mathbf{T})$ with programs of depth at most $k$.
\end{definition}

Our results are phrased in terms of three circuit complexity classes,
reported below with the known inclusions.
\begin{align*}
  \textsc{AC}^0 \subsetneq \textsc{ACC}^0 \subseteq \textsc{NC}^1
\end{align*}

The first result is for the flip-flop operator, and it builds on
a result for the complexity of group-free semigroups
\cite{chandra1985unbounded}.
\begin{restatable}{theorem}{theoremconstantdepthflipflop}
  \label{theorem:constant_depth_flipflop}
  For any $k$,
  evaluation of $\mathcal{L}(\mathcal{F} \,|\, k)$ is in
  $\textsc{AC}^0$.
\end{restatable}

Cyclic group operators, and solvable group operators in general, increase the
complexity of the evaluation problem.

\begin{restatable}{theorem}{theoremconstantdepthacc}
  \label{theorem:constant_depth_acc}
  For any depth $k$, and 
  any finite set $\mathbf{S}$ of solvable group operators, 
  evaluation of
  $\mathcal{L}(\mathcal{F}, \mathbf{S} \,|\, k)$ is in $\textsc{ACC}^0$.
  Furhermore, there is a solvable group operator $\mathcal{G}$ such that,
  for every depth $k \geq 1$,
  evaluation of $\mathcal{L}(\mathcal{G} \,|\, k)$ is not in 
  $\textsc{AC}^0$.
\end{restatable}

The upper bound makes use of a result for the complexity of solvable
groups \cite{barrington1989bounded}.
The lower bound relies on the fact that the cyclic (hence solvable) group $C_2$
captures the parity function, known not to be in
$\textsc{AC}^0$ \cite{furst1984parity}.

Non-solvable groups introduce an exact correspondence with 
the larger cicuit complexity class $\textsc{NC}^1$.
Our result builds on a result for the complexity of
non-solvable groups \cite{barrington1989bounded}.

\begin{restatable}{theorem}{theoremcomplexitync}
  \label{theorem:complexity_nc}
  For any depth $k$, and
  any finite set $\mathbf{G}$ of groups containing a non-solvable group,
  the evaluation problem of
  $\mathcal{L}(\mathcal{F},\mathbf{G} \,|\, k)$ is complete for $\textsc{NC}^1$
  under $\textsc{AC}^0$ reductions.
\end{restatable}

\section{Relationship with Past LTL}

We show the Transformation Logic $\mathcal{L}(\mathcal{F})$ captures Past LTL. 
\ifarxiv
Background on Past LTL is included in the appendix.
\fi
First, the \emph{before} and \emph{since} operators correspond to the delay and
flip-flop operators, respectively.

\begin{restatable}{lemma}{lemmapltlbeforeop}
  \label{lemma:pltl_beforeop}
  Consider a Past LTL formula $\varphi = \beforeop p$. Let $P$ be the singleton program consisting of the definition
$q \ruleif \operatorname{\mathcal{D}} p$.
For every interpretation $I$ of $\varphi$ and every time point $t$,
  it holds that $(I, t) \models \varphi$ iff 
  $(P, I, t) \models q$.
\end{restatable}

\begin{restatable}{lemma}{lemmapltlsinceop}
  \label{lemma:pltl_sinceop}
  Consider a Past LTL formula $\varphi = a \sinceop b$.
Let $P$ be the program consisting of the two definitions
$c \ruleif \neg a$ and
$p \ruleif \mathcal{F}(b, c \,|\, 0)$.
For every interpretation $I$ of $\varphi$ and every time point $t$,
  it holds that $(I, t) \models \varphi$ iff 
  $(P, I, t) \models p$.
\end{restatable}

Given the lemmas above, we can build a program for any given Past LTL formula by
induction on its parse-tree, introducing one static definition for each
occurence of a Boolean operator, one delay definition for each occurrence of the
before operator, and one transformation definition for each occurrence of the
since operator.
\begin{restatable}{theorem}{theorempltltranslation}
  \label{theorem:pltl_translation}
  Every Past LTL formula $\varphi$ can be translated into a program  
  of $\mathcal{L}(\mathcal{F})$.
Such a program has size linear in the size of $\varphi$, and 
  it can computed from $\varphi$ in logarithmic space.
\end{restatable}

\section{Related Work}
\label{sec:related-work}

The Transformation Logics are a valuable addition to the rich set of temporal
and dynamic logics adopted in AI.
Such logics include \emph{propositional} temporal logics of the past such as Past LTL,
cf.\ \cite{manna1991completing};
temporal logics of the future interpreted both on finite and infinite traces 
\cite{pnueli1977temporal,wolper1983temporal,degiacomo2013ltlf};
dynamic logics such as PDL, cf.\
\cite{harel2000dynamic}, and linear dynamic logic on finite traces
\cite{degiacomo2013ltlf}; 
logics with a dense interpretation of time such as
metric temporal logic \cite{koymans1990mtl}.
They also include \emph{first-order} variants such as Past FOTL 
\cite{chomicki1995efficient}.
Finally, they include \emph{rule-based} logics of several kinds:
propositional modal temporal logics such as Templog
\cite{abadi1989temporal,baudinet1995expressive};
logics with first-order variables where time occurs explicitly as an
argument of a special sort such as  Datalog$_\mathrm{1S}$ \cite{chomicki1988temporal},
and Temporal Datalog \cite{ronca2018stream,ronca2022delay};
logics with metric temporal operators such as
DatalogMTL \cite{brandt2018querying,walega2019datalogmtl,walega2020datalogmtl};
and propositional modal logics specialised for reasoning on streams
\cite{beck2018lars}, and weighted variants thereof \cite{eiter2020weighted}
which share an algebraic flavour with our work.

\section{Conclusions}

The Transformation Logics provide a general framework for logics over sequences.
The flexibility given by the transformation operators allows for defining logics
with the expressivity of many different fragments of the regular languages, 
under the guidance of group theory.
While the star-free regular languages are well-known as they are the
expressivity of many well-established formalisms, the fragments beyond them are
less known.
The Transformation Logics provide a way to explore them.

\section*{Acknowledgments}
Alessandro Ronca is supported by the European Research Council (ERC) under the
European Union's Horizon 2020 research and innovation programme (Grant
agreement No.\ 852769, ARiAT).
\ifarxiv
\else
For the purpose of Open Access, the author has applied a CC BY public copyright
licence to any Author Accepted Manuscript (AAM) version arising from this
submission.
\fi

\bibliographystyle{named}
\bibliography{bibliography}

\begin{thebibliography}{}

\bibitem[\protect\citeauthoryear{Abadi and Manna}{1989}]{abadi1989temporal}
Mart{\'{\i}}n Abadi and Zohar Manna.
\newblock Temporal logic programming.
\newblock {\em J. Symb. Comput.}, 8, 1989.

\bibitem[\protect\citeauthoryear{Arbib}{1969}]{arbib1969theories}
Michael Arbib.
\newblock {\em Theories of Abstract Automata}.
\newblock Automatic Computation. Prentice-Hall, 1969.

\bibitem[\protect\citeauthoryear{Bacchus \bgroup \em et al.\egroup
  }{1996}]{bacchus1996rewarding}
Fahiem Bacchus, Craig Boutilier, and Adam~J. Grove.
\newblock Rewarding behaviors.
\newblock In {\em {AAAI}}, 1996.

\bibitem[\protect\citeauthoryear{Barrington}{1989}]{barrington1989bounded}
David A.~Mix Barrington.
\newblock Bounded-width polynomial-size branching programs recognize exactly
  those languages in nc1.
\newblock {\em J. Comput. Syst. Sci.}, 38, 1989.

\bibitem[\protect\citeauthoryear{Baudinet}{1995}]{baudinet1995expressive}
Marianne Baudinet.
\newblock On the expressiveness of temporal logic programming.
\newblock {\em Inf. Comput.}, 117, 1995.

\bibitem[\protect\citeauthoryear{Beck \bgroup \em et al.\egroup
  }{2018}]{beck2018lars}
Harald Beck, Minh Dao{-}Tran, and Thomas Eiter.
\newblock {LARS:} {A} logic-based framework for analytic reasoning over
  streams.
\newblock {\em Artif. Intell.}, 261, 2018.

\bibitem[\protect\citeauthoryear{Bonassi \bgroup \em et al.\egroup
  }{2023}]{icaps2023bdffgs}
Luigi Bonassi, Giuseppe {De Giacomo}, Marco Favorito, Francesco Fuggitti,
  {Alfonso Emilio} Gerevini, and Enrico Scala.
\newblock Planning for temporally extended goals in pure-past linear temporal
  logic.
\newblock In {\em {ICAPS}}, volume~33, 2023.

\bibitem[\protect\citeauthoryear{Brafman and {De
  Giacomo}}{2019}]{brafman2019planning}
Ronen~I. Brafman and Giuseppe {De Giacomo}.
\newblock Planning for {LTLf/LDLf} goals in non-markovian fully observable
  nondeterministic domains.
\newblock In {\em {IJCAI}}, 2019.

\bibitem[\protect\citeauthoryear{Brafman \bgroup \em et al.\egroup
  }{2018}]{brafman2018ltlf}
Ronen~I. Brafman, Giuseppe {De Giacomo}, and Fabio Patrizi.
\newblock {LTLf/LDLf} non-markovian rewards.
\newblock In {\em {AAAI}}, 2018.

\bibitem[\protect\citeauthoryear{Brandt \bgroup \em et al.\egroup
  }{2018}]{brandt2018querying}
Sebastian Brandt, Elem~G{\"{u}}zel Kalayci, Vladislav Ryzhikov, Guohui Xiao,
  and Michael Zakharyaschev.
\newblock Querying log data with metric temporal logic.
\newblock {\em J. Artif. Intell. Res.}, 62, 2018.

\bibitem[\protect\citeauthoryear{Camacho \bgroup \em et al.\egroup
  }{2017}]{camacho2017nondeterministic}
Alberto Camacho, Eleni Triantafillou, Christian~J. Muise, Jorge~A. Baier, and
  Sheila~A. McIlraith.
\newblock Non-deterministic planning with temporally extended goals: {LTL} over
  finite and infinite traces.
\newblock In {\em {AAAI}}, 2017.

\bibitem[\protect\citeauthoryear{Camacho \bgroup \em et al.\egroup
  }{2019}]{camacho2019ltl}
Alberto Camacho, Rodrigo~Toro Icarte, Toryn~Q. Klassen, Richard~Anthony
  Valenzano, and Sheila~A. McIlraith.
\newblock {LTL} and beyond: Formal languages for reward function specification
  in reinforcement learning.
\newblock In {\em {IJCAI}}, 2019.

\bibitem[\protect\citeauthoryear{Chandra \bgroup \em et al.\egroup
  }{1985}]{chandra1985unbounded}
Ashok~K. Chandra, Steven Fortune, and Richard~J. Lipton.
\newblock Unbounded fan-in circuits and associative functions.
\newblock {\em J. Comput. Syst. Sci.}, 30, 1985.

\bibitem[\protect\citeauthoryear{Chomicki and
  Imielinski}{1988}]{chomicki1988temporal}
Jan Chomicki and Tomasz Imielinski.
\newblock Temporal deductive databases and infinite objects.
\newblock In {\em {PODS}}, 1988.

\bibitem[\protect\citeauthoryear{Chomicki}{1995}]{chomicki1995efficient}
Jan Chomicki.
\newblock Efficient checking of temporal integrity constraints using bounded
  history encoding.
\newblock {\em {ACM} Trans. Database Syst.}, 20, 1995.

\bibitem[\protect\citeauthoryear{Dantsin \bgroup \em et al.\egroup
  }{2001}]{dantsin2001complexity}
Evgeny Dantsin, Thomas Eiter, Georg Gottlob, and Andrei Voronkov.
\newblock Complexity and expressive power of logic programming.
\newblock {\em {ACM} Comput. Surv.}, 33, 2001.

\bibitem[\protect\citeauthoryear{{De Giacomo} and
  Rubin}{2018}]{degiacomo2018automata}
Giuseppe {De Giacomo} and Sasha Rubin.
\newblock Automata-theoretic foundations of {FOND} planning for {LTLf} and
  {LDLf} goals.
\newblock In {\em {IJCAI}}, 2018.

\bibitem[\protect\citeauthoryear{{De Giacomo} and
  Vardi}{2013}]{degiacomo2013ltlf}
Giuseppe {De Giacomo} and Moshe~Y. Vardi.
\newblock Linear temporal logic and linear dynamic logic on finite traces.
\newblock In {\em {IJCAI}}, 2013.

\bibitem[\protect\citeauthoryear{{De Giacomo} \bgroup \em et al.\egroup
  }{2020a}]{degiacomo2020temporal}
Giuseppe {De Giacomo}, Marco Favorito, Luca Iocchi, Fabio Patrizi, and
  Alessandro Ronca.
\newblock Temporal logic monitoring rewards via transducers.
\newblock In {\em {KR}}, 2020.

\bibitem[\protect\citeauthoryear{{De Giacomo} \bgroup \em et al.\egroup
  }{2020b}]{degiacomo2020bolts}
Giuseppe {De Giacomo}, Luca Iocchi, Marco Favorito, and Fabio Patrizi.
\newblock Restraining bolts for reinforcement learning agents.
\newblock In {\em {AAAI}}, 2020.

\bibitem[\protect\citeauthoryear{D{\"o}m{\"o}si and
  Nehaniv}{2005}]{domosi2005algebraic}
P{\'a}l D{\"o}m{\"o}si and Chrystopher~L Nehaniv.
\newblock {\em Algebraic theory of automata networks: An introduction}.
\newblock SIAM, 2005.

\bibitem[\protect\citeauthoryear{Eiter and Kiesel}{2020}]{eiter2020weighted}
Thomas Eiter and Rafael Kiesel.
\newblock Weighted {LARS} for quantitative stream reasoning.
\newblock In {\em {ECAI}}, 2020.

\bibitem[\protect\citeauthoryear{Furst \bgroup \em et al.\egroup
  }{1984}]{furst1984parity}
Merrick~L. Furst, James~B. Saxe, and Michael Sipser.
\newblock Parity, circuits, and the polynomial-time hierarchy.
\newblock {\em Math. Syst. Theory}, 17, 1984.

\bibitem[\protect\citeauthoryear{Gallian}{2021}]{gallian2021contemporary}
Joseph Gallian.
\newblock {\em Contemporary abstract algebra}.
\newblock Chapman and Hall/CRC, 2021.

\bibitem[\protect\citeauthoryear{Ginzburg}{1968}]{ginzburg}
Abraham Ginzburg.
\newblock {\em Algebraic Theory of Automata}.
\newblock Academic Press, 1968.

\bibitem[\protect\citeauthoryear{Gorenstein \bgroup \em et al.\egroup
  }{2018}]{gorenstein2018classification}
Daniel Gorenstein, Richard Lyons, and Ronald Solomon.
\newblock {\em The Classification of the Finite Simple Groups, Number 8}.
\newblock American Mathematical Soc., 2018.

\bibitem[\protect\citeauthoryear{Groupprops}{2023}]{subgroupsofafive}
Groupprops.
\newblock Alternating group $a_5$.
\newblock
  \url{https://groupprops.subwiki.org/wiki/Alternating_group:A5#Subgroups},
  2023.
\newblock Accessed: 2023-08-08.

\bibitem[\protect\citeauthoryear{Harel \bgroup \em et al.\egroup
  }{2000}]{harel2000dynamic}
David Harel, Jerzy Tiuryn, and Dexter Kozen.
\newblock {\em Dynamic Logic}.
\newblock MIT Press, 2000.

\bibitem[\protect\citeauthoryear{Hopcroft and
  Ullman}{1979}]{hopcroft1979ullman}
J.~Hopcroft and J.~Ullman.
\newblock {\em Introduction to Automata Theory, Languages, and Computation}.
\newblock Addison-Wesley, 1979.

\bibitem[\protect\citeauthoryear{Icarte \bgroup \em et al.\egroup
  }{2018}]{toroicarte2018teaching}
Rodrigo~Toro Icarte, Toryn~Q. Klassen, Richard~Anthony Valenzano, and Sheila~A.
  McIlraith.
\newblock Teaching multiple tasks to an {RL} agent using {LTL}.
\newblock In {\em {AAMAS}}, 2018.

\bibitem[\protect\citeauthoryear{Kleene}{1956}]{kleene1956representation}
Stephen~C. Kleene.
\newblock Representation of events in nerve nets and finite automata.
\newblock {\em Automata studies}, 34, 1956.

\bibitem[\protect\citeauthoryear{Koymans}{1990}]{koymans1990mtl}
Ron Koymans.
\newblock Specifying real-time properties with metric temporal logic.
\newblock {\em Real Time Syst.}, 2, 1990.

\bibitem[\protect\citeauthoryear{Krohn and Rhodes}{1965}]{krohn1965rhodes}
Kenneth Krohn and John Rhodes.
\newblock Algebraic theory of machines. {I}. {P}rime decomposition theorem for
  finite semigroups and machines.
\newblock {\em Trans. Am. Math. Soc.}, 116, 1965.

\bibitem[\protect\citeauthoryear{Manna and Pnueli}{1991}]{manna1991completing}
Zohar Manna and Amir Pnueli.
\newblock Completing the temporal picture.
\newblock {\em Theor. Comput. Sci.}, 83, 1991.

\bibitem[\protect\citeauthoryear{Pnueli}{1977}]{pnueli1977temporal}
Amir Pnueli.
\newblock The temporal logic of programs.
\newblock In {\em FOCS}, 1977.

\bibitem[\protect\citeauthoryear{Ronca \bgroup \em et al.\egroup
  }{2018}]{ronca2018stream}
Alessandro Ronca, Mark Kaminski, Bernardo~Cuenca Grau, Boris Motik, and Ian
  Horrocks.
\newblock Stream reasoning in temporal datalog.
\newblock In {\em {AAAI}}, 2018.

\bibitem[\protect\citeauthoryear{Ronca \bgroup \em et al.\egroup
  }{2022}]{ronca2022delay}
Alessandro Ronca, Mark Kaminski, Bernardo~Cuenca Grau, and Ian Horrocks.
\newblock The delay and window size problems in rule-based stream reasoning.
\newblock {\em Artif. Intell.}, 306, 2022.

\bibitem[\protect\citeauthoryear{Roth \bgroup \em et al.\egroup
  }{2004}]{roth2004fundamentals}
Charles Roth, Larry Kinney, and Eugene John.
\newblock {\em Fundamentals of logic design}.
\newblock Thomson Brooks/Cole, 2004.

\bibitem[\protect\citeauthoryear{Sch{\"u}tzenberger}{1965}]{schutzenberger1965finite}
Marcel~Paul Sch{\"u}tzenberger.
\newblock On finite monoids having only trivial subgroups.
\newblock {\em Inf. Control.}, 8, 1965.

\bibitem[\protect\citeauthoryear{Torres and Baier}{2015}]{torres2015polynomial}
Jorge Torres and Jorge~A. Baier.
\newblock Polynomial-time reformulations of {LTL} temporally extended goals
  into final-state goals.
\newblock In {\em {IJCAI}}, 2015.

\bibitem[\protect\citeauthoryear{Vardi}{1982}]{vardi1982complexity}
Moshe~Y. Vardi.
\newblock The complexity of relational query languages (extended abstract).
\newblock In {\em {STOC}}, 1982.

\bibitem[\protect\citeauthoryear{Walega \bgroup \em et al.\egroup
  }{2019}]{walega2019datalogmtl}
Przemyslaw~Andrzej Walega, Bernardo~Cuenca Grau, Mark Kaminski, and Egor~V.
  Kostylev.
\newblock {DatalogMTL}: Computational complexity and expressive power.
\newblock In {\em {IJCAI}}, 2019.

\bibitem[\protect\citeauthoryear{Walega \bgroup \em et al.\egroup
  }{2020}]{walega2020datalogmtl}
Przemyslaw~Andrzej Walega, Bernardo~Cuenca Grau, Mark Kaminski, and Egor~V.
  Kostylev.
\newblock {DatalogMTL} over the integer timeline.
\newblock In {\em {KR}}, 2020.

\bibitem[\protect\citeauthoryear{Walega \bgroup \em et al.\egroup
  }{2023}]{walega2023stream}
Przemyslaw~Andrzej Walega, Mark Kaminski, Dingmin Wang, and Bernardo {Cuenca
  Grau}.
\newblock Stream reasoning with {DatalogMTL}.
\newblock {\em J. Web Semant.}, 76, 2023.

\bibitem[\protect\citeauthoryear{Wolper}{1983}]{wolper1983temporal}
Pierre Wolper.
\newblock Temporal logic can be more expressive.
\newblock {\em Inf. Control.}, 56, 1983.

\end{thebibliography}

\ifarxiv
 \onecolumn
 \newpage
 \appendix

\section*{\LARGE Appendix of `The Transformation Logics'}
\medskip

\paragraph{Summary.}
The appendix consists of six parts.
\begin{description}
  \item
    \emph{Part~\ref{sec:motivation-example}.} We provide the full version of
    Example~1 and Example~2 from the introduction.
  \item
    \emph{Part~\ref{appendix:extending-past-ltl}.}
    We illustrate how Past LTL could be extended using transformation
    operators.
  \item
    \emph{Part~\ref{appendix:semigrouplike-operators}.}
    We explain how the definitions of flip-flop and cyclic operators are
    derived.
  \item
    \emph{Part~\ref{sec:additional-background}.}
    We introduce additional background required by the proofs.
  \item
    \emph{Part~\ref{sec:appendix-preliminaries}.}
    We prove some preliminary results used in the proofs of the main results.
  \item
    \emph{Part~\ref{sec:proofs-expressivity}.}
    We present the proofs of all our results, following the order in which the
    results are stated in the main body. 
  \end{description}

\section{Full Version of the Examples}
\label{sec:motivation-example}

\subsection{Example~1}

We provide full details on Example~1 reported in the introduction section.
It is an example in an agent-based setting, which can be cast into
Reinforcement Learning applications as well as Planning applications.

Let us consider an agent, that can accomplish a task $\mathit{assignedTask}$
multiple times.
We would like to know whether the agent has accomplished the task on every past
day, considering that we get an update every minute.
Namely, we receive as input a sequence $I = I_1, \dots, I_\ell$ where
$\mathit{assignedTask} \in I_t$ iff $t$ is an instant at which
the agent has just completed the assigned task.
The specification can be captured in a Transformation Logic that features an
operator which signals the end of every day. 

The operarator is defined as follows, letting $n = 24*60 = 1440$ be the
minutes in a day.
\begin{align*}
  \operatorname{endOfDay} = \langle [0,n-1], T, \phi, \psi \rangle
\end{align*}
We have that the set of transformations is $T = \{ \mathit{inc} \}$ with 
$\mathit{inc}(x) = x+1 \operatorname{mod} n$,
the function $\phi: \mathbb{B}^0 \to T$ is the constant map to $\mathit{inc}$,
and the function $\psi$ is defined as $\psi(b_0,\dots, b_{10}) = 1$ iff 
$b_0, \dots, b_{10}$ is the binary encoding of $n-1$---note that eleven bits
suffice to encode $n$ numbers.

The example is captured by the following program.
\begin{align}
  \label{rule1}
  \mathit{end} & \ruleif \operatorname{endOfDay}(\mid 0)
\\
  \label{rule2}
  \mathit{doneTaskToday} & \ruleif \mathcal{F}(\mathit{assignedTask},
  \mathit{end} \mid 0)
  \\
  \label{rule3}
  \mathit{failedInADay} & \ruleif \mathit{end} \land \neg \mathit{doneTaskToday}
  \\
  \label{rule4}
  \mathit{failed} & \ruleif \mathcal{F}(\mathit{failedInADay}, \bot \mid 0)
\end{align}
First, we have that \eqref{rule1} defines $\mathit{end}$ to be true at
the end of every day.
Then, \eqref{rule2} defines $\mathit{doneTaskToday}$ to be true if the task
has been completed since the end of the previous day---note that the flip-flop
sets when the task is completed, but it resets at the end of every day.
Then, \eqref{rule3} establishes that the agent has failed to accomplish the task
in the current day if it is the end of the day and 
$\mathit{doneTaskToday}$ is false.
In such a case, \eqref{rule4} raises the flag $\mathit{failed}$,
permanently.

The characteristic semigroup of $\operatorname{endOfDay}$ is the cyclic group
$C_{1440}$, and hence it can be expressed using the prime operators 
$\mathcal{C}_2$, $\mathcal{C}_3$, $\mathcal{C}_5$, by 
Theorem~\ref{theorem:prime_decomposition_of_transformation_logics} proven below.
This would allow us to get rid of the end-of-day operator, in favour of prime
operators.

The program belongs to a Transformation Logic with a data complexity in
$\textsc{ACC}^0$ according to Theorem~\ref{theorem:constant_depth_acc}, since
the program uses the flip-flop operator and solvable group operators.

This example cannot be captured in Past LTL, since 
the language recognised by the query $(P,\mathit{failed})$ is not star-free.
This can be seen by verifying that the language is not \emph{noncounting}, cf.\
\cite{ginzburg}.

\subsection{Example~2}

We provide full details on Example~2 reported in the introduction section.

A cycling race with $n$ participants takes place, and we need to keep track of
the live ranking.
At each step an overtake can happen, in which case it is communicated to us in
the form $(i,i+1)$ meaning that the cyclist in position $i$ has overtaken the
one in position $i+1$. Specifically, we receive a proposition
$\mathit{overtake}(i,i+1)$.
We know the initial ranking, and we need to keep track of the live ranking.

The ranking corresponds to the symmetric group $S_n$.
In fact, the symmetric group $S_n$ is the group of all permutations of
a list of $n$ objects. It can be seen as the set of all transformations over the
integer interval $[1,n]$. 
For example, for $n = 3$, the transformation $\{ 1 \mapsto 2, 2 \mapsto 3, 3
\mapsto 1 \}$ describes the permutation where we move the first element to the
second position, the second element to the third position, and the third element
to the first position.
The symmetric group is generated by the set of \emph{adjacent transpositions}.
An \emph{adjacent transposition} is a permutation that swaps two adjacent
elements and does not affect any other element.
For example, for $n = 3$, the transformation $\{ 1 \mapsto 2, 2 \mapsto 1, 3
\mapsto 3 \}$ is an adjacent transposition. 
We observe that the ranking of the example is a list of cyclicists, and the 
`overtakes' are adjacent transpositions.
Thus the example amounts to the symmetric group $S_n$.

The transformation operator defined by $S_n$ has a large number 
of transformations; this number is $n!$ since it is the number of permutations.
However we are only interested in adjacent transpositions, along with the
identity permutation to capture the case when no overtake happens. This gives us
a total number of $n$ transformations.
Thus we consider a simplified version of the operator defined by $S_n$, having 
only transformations given by adjacent transpositions, along with identity.
The operarator is defined as follows.
\begin{align*}
  \operatorname{ranking} = \langle X, T, \phi, \psi \rangle
\end{align*}
The domain $X$ is the set of lists of the integers $[1,n]$.
The set of transformations is 
$T = \{ \mathit{id} \} \cup 
\{ \mathit{transpose}_{(i,i+1)} \mid i \in [1,n-1] \}$ 
where 
\begin{align*}
  \mathit{transpose}_{(i,i+1)}(j_1, \dots, j_n) = j_1, \dots, j_{i-1}, j_{i+1},
  j_i, j_{i+2}, \dots, j_n.
\end{align*}
The function $\phi$ maps the all-zero vector
$\langle 0, \dots, 0 \rangle$ to $\mathit{id}$, and it maps
every indicator vector $\langle b_1, \dots, b_m \rangle$ indicating a number
$i$ to the transformation $\mathit{transpose}_{(i,i+1)}$. For example, the
indicator vector $\langle 0, 0, 1, 0, 0 \rangle$ is mapped to 
$\mathit{transpose}_{(3,4)}$.
Let $m$ be the minimum number of bits to encode $n$.
The function $\psi: X \to \mathbb{B}^{nm}$ maps every list in $X$ to its binary
encoding.

The example is captured by the following singleton program.
\begin{align*}
  p_{1,1}, \dots, p_{1,m}, \dots, p_{n,1}, \dots, p_{n,m} & \ruleif 
  \operatorname{ranking}\big(\mathit{overtake}(1,2), \dots,
  \mathit{overtake}(n-1,n) \mid (c_1, \dots, c_n) \big)
\end{align*}
The list $(c_1, \dots, c_n)$ is the initial ranking.
The head variables specify the live ranking, based on the updates
provided as input.

\section{Extending Past LTL}
\label{appendix:extending-past-ltl}

We first introduce the syntax and semantics of Past LTL, which is required in
this section as well as in the proofs later on. Then we briefly discuss an
aspect regarding the formalisation of the semantics of Past LTL, that becomes
particularly interesting in light of the semantics of the Transformation Logics.
We conclude with the main content of this section, which is a discussion on
possible extensions of Past LTL by means of transformation operators.

\subsection{Background: Past LTL}

We introduce \emph{Past Linear Temporal Logic} (Past LTL), c.f.\
\cite{manna1991completing}.

\medskip
\par
\noindent
\textbf{Syntax.}
A \emph{Past LTL formula} is built out of propositional logic formulas by means
of the \emph{temporal operators} $\{ \beforeop, \sinceop \}$, called the
\emph{before operator} and the \emph{since operator}, respectively.
It is defined inductively as a propositional logic formula or
one of the following two expressions: $\prevop \alpha$,
$\alpha \sinceop \beta$ where $\alpha$ and $\beta$ are Past LTL formulas.
Additional temporal operators can be defined in terms of the previous operators,
e.g., \emph{sometimes in the past}
$\peventuallyop \alpha := \top \sinceop \alpha$ and
\emph{always in the past}
$\palwaysop \alpha :=  \neg \peventuallyop \neg \alpha$.

\medskip
\par
\noindent
\textbf{Semantics.}
An \emph{interpretation for a Past LTL formula $\alpha$} is a finite non-empty
sequence of sets of propositional variables---note that the non-emptiness
requirement does not exclude the sequence consisting of the empty set.
Intuitively, the fact that a variable appears in the $t$-th element of the
intepretation means that the variable stands for a \emph{proposition} that is
true at time $t$.
Note that an interpretation can also be seen as a sequence of assignments.
The semantics of formulas is defined in terms of the following satisfiability
relation, which extends the one for propositional logic formulas, also making it
relative to time $t \in \{ 1, 2, \dots, \}$.
Given a Past LTL formula $\alpha$, and an interpretation $I = I_1, \dots, I_n$
for $\alpha$, the \emph{satisfiability relation} $I,t \models \alpha$ is defined
following the structural definition of formulas:
\begin{itemize}
\item
    $I,t \models a$ iff $a \in I_t$,
  \item
    $I,t \models \neg \alpha$ iff $I,t \not\models \alpha$,
  \item
    $I,t \models \alpha \lor \beta$ iff $I,t \models \alpha$ or
    $I,t \models \beta$,
  \item
    $I,t \models \alpha \land \beta$ iff $I,t \models \alpha$ and
    $I,t \models \beta$.
  \item
    $I,t \models \LTLcircleminus \alpha$ iff
    $I,t-1 \models \alpha$,
  \item
    $I,t \models \alpha \sinceop \beta$ iff
    there exists $j \in [1,t]$ such that $I,j \models \beta$ and, for every
    $k \in [j+1,t]$, it holds that
    $I,k \models \alpha$.
\end{itemize}

The since operator admits an equivalent, inductive,
caracterisation of its semantics, cf.\ \cite{chomicki1995efficient}.
We have that
$I, t \models \alpha \sinceop \beta$ holds
iff any of the two following conditions hold:
\begin{enumerate}
  \item
    $I, t \models \beta$,
  \item
    $I, t-1 \models \alpha \sinceop \beta$ and $I, t \models \alpha$.
\end{enumerate}

\subsection{Discussion of the Semantics of Past LTL}
It is worth noting that the semantics of the since operator is
defined in terms of the logic $\textsc{FOL}[<]$ first-order logic on a
finite linear order. 
The expressivity of $\textsc{FOL}[<]$ is the star-free regular languages.
Thus, it is not possible to use $\textsc{FOL}[<]$ for defining the semantics
of operators whose expressivity is beyond the star-free regular languages.
This justifies the alternative approach adopted for transformation
operators.

\subsection{Extending Past LTL}

We discuss how to extend Past LTL with transformation operators. The aim is
to increase its expressivity beyond star-free, and to make it more
succinct with operators such as threshold counter operators.
Every transformation operator $\mathcal{T}$ with unary output can
be naturally included as an operator in Past LTL.
Thus, the set of formulas can be extended with formulas of the form 
$\mathcal{T}(\varphi_1, \dots, \varphi_n \mid x)$ where $\varphi_i$ is a
formula, and $x$ belongs to the transformation domain of $\mathcal{T}$. 
This yields the logic $\textsc{PLTL}(\mathbf{T})$, the extension of Past LTL
with operators from $\mathbf{T}$.

\begin{example}
  Consider the parity operator $\mathcal{P}$. Then, the formula
  $\palwaysop (a \lor \mathcal{P}(\top \mid 0))$ is true whenever $a$ has always
  happened in odd positions. Note that the disjunction is always satisfied in
  even positions, where $\mathcal{P}(\top \mid 0)$ evaluates to true.
\end{example}

Transformation operators with output arity greater than one do not seem to admit
a natural way to be included in the syntax of Past LTL formulas.
This is not limiting in terms of expressivity as long as one is concerned with
finite operators.
In fact, any finite operator with a transformation domain 
$\{ x_1, \dots, x_n \}$ can be captured by $n$ unary-output operators, each
indicating whether $x_i$ is the current transformation element.
Still, the \wrule{}-based syntax of the Transformation Logics plays in favour
of succinctness and intelligibility.

\section{Flip-Flop and Cyclic Operators: How
Definitions~\ref{definition:flipflop_operator} and
\ref{definition:cyclic_operator} Are Obtained}

\label{appendix:semigrouplike-operators}

We explain how the definition of flip-flop operator
(Definition~\ref{definition:flipflop_operator}) and 
cyclic operator (Definition~\ref{definition:cyclic_operator}) is derived from
Definition~\ref{definition:semigrouplike_operator}.

\paragraph{Flip-flop operator.}
Let us recall that the flip-flop monoid is $F = \{ r, s, e \}$ where $e$ is the
identity element, and $(r \cdot s) = s$, $(s \cdot s) = s$,
$(r \cdot r) = r$, $(s \cdot r) = r$.
We derive the operator defined by $F$, which has the form
\begin{align*}
  \langle X, T, \phi, \psi \rangle,
\end{align*}
and whose components we describe next.
First, we have that $X = F$.
For the transformations, note that multiplying an element in $F$ by $s$ and $r$
always yields $s$ and $r$, respectively, whereas multiplying an element by $e$
always yields the element itself.
These maps yield the transformations of the operator defined by $F$.
Specifically, we have that the set $T$ of transformations consists of the map 
$s(x) = s$, the map $r(x) = r$, and the map $e(x) = e$.
Thus, $T = F$, with the understanding that the elements of $F$ are seen as the
maps above.
The next step is a simplification step.
We observe that including the element $e$ in $X$ is redundant.
Note that element $e$ can be transformed to $s,r,e$ as $s(e) = s$,
$r(e) = r$, and $e(e) = e$, respectively.
At the same time, only $e$ itself can be transformed into $e$.
If we use the operator with initial element different from $e$---e.g.,
$\mathcal{F}(a,b \mid s)$---then the element $e$ will never be associated with
the operator.
If we use the operator with initial element $e$---e.g.,
$\mathcal{F}(a,b \mid e)$---then the associated element can be $e$ from the
beginning to a certain point, and then possibly become $r$ or $s$.
Both behaviours can be captured by a flip-flop operator that omits $e$ from $X$.
Thus, in our definition of flip-flop operator we omit $e$ from $X$.
Then, we can rename $s,r$ as elements of $X$ to $1,0$, respectively.
And we can rename $s,r,e$ as elements of $T$ to
$\mathit{set},\mathit{reset},\mathit{id}$, respectively.
Finally, the functions $\phi$ and $\psi$ are arbitrary, but importantly they are
surjective and injective, respectively, as required by the definition.

\paragraph{Cyclic operators.}
Let us recall that the cyclic group $C_n$ is (isomorphic to) the integers 
$\{ 0, \dots, n-1 \}$ with modular addition 
$i \cdot j = i + j \operatorname{mod} n$ as an operation.
We derive the operator defined by $C_n$, which has the form
\begin{align*}
  \langle X, T, \phi, \psi \rangle,
\end{align*}
and whose components we describe next.
First, we have that $X = \{ 0, \dots, n-1 \}$.
For the transformations, note that multiplying an element $i$ in $C_n$ by
another element $j$ yields $i+j \operatorname{mod} n$.
These multiplications yield the transformations of the operator defined by
$C_n$. Specifically, we have that the set $T$ of transformations consists of the
maps $i(j) = i \cdot j = i+j \operatorname{mod} n$ for every 
$i \in \{ 0, \dots, n-1\}$.
Thus, $T = \{ 0, \dots, n-1 \}$, with the understanding that each integer is
seen one the maps above.
Then, we can rename every map $i$ to $\operatorname{inc}_i$.
Finally, the functions $\phi$ and $\psi$ are arbitrary, but importantly they are
surjective and injective, respectively, as required by the definition.

\section{Additional Background}

\label{sec:additional-background}

We introduce additional background that is required in the proofs below.
We introduce  
(i)~background on Automata Theory, 
(ii)~the cascade composition of automata, 
(iii)~additional background on Semigroup and Group Theory, and 
(iv)~background on Algebraic Automata Theory.

\subsection{Automata}

We introduce \emph{finite automata}, c.f.\
\cite{ginzburg,arbib1969theories,hopcroft1979ullman,domosi2005algebraic}, which
here we term simply automata.
An automaton is a mathematical description of a stateful machine that processes
strings in a synchronous sequential manner, by performing an elementary
computation step on each input letter, and returning an output letter as a
result of the step. 
At its core lies the mechanism that updates the internal state at each step. 
This mechanism is captured by the notion of semiautomaton.

\paragraph{Syntax.}
A \emph{semiautomaton} is a tuple 
$D = \langle \Sigma, Q, \delta \rangle$ where:
$\Sigma$ is an alphabet called the \emph{input alphabet};
$Q$ is a finite non-empty set of elements called \emph{states}; 
$\delta: Q \times \Sigma \to Q$ is a function called \emph{transition function}.
A semiautomaton is trivial if $|Q| = 1$.

Automata are obtained from semiautomata by adding an initial state and output
function.

An \emph{automaton} is a tuple 
$A = \langle \Sigma, Q, \delta, q_\mathrm{init}, \Gamma, \theta \rangle$ where:
$D_A = \langle \Sigma, Q, \delta \rangle$ is a semiautomaton,
called the \emph{semiautomaton} of $A$;
$q_\mathrm{init} \in Q$ is called \emph{initial state};
$\Gamma$ is an alphabet called \emph{output alphabet};
$\theta: Q \to \Gamma$ is called \emph{output function}.
An \emph{acceptor} is a special kind of automaton where the output function is
a Boolean-valued function $\theta: Q \to \mathbb{B}$.
In the rest we assume that automata are acceptors.

It is worth noting that the kind of automaton we have defined is called a 
\emph{Moore machine}, as opposed to a Mealy machine.

\paragraph{Semantics.}
The \emph{execution} of automaton $A$ on an input string 
$s = \sigma_1 \dots \sigma_n$ proceeds as follows.
Initially, the automaton is in state $q_0 = q_\mathrm{init}$.
Upon reading letter $\sigma_i$, the state is updated as 
$q_i = \delta(q_{i-1},\sigma_i)$, and the automaton returns the letter
$\theta(q_i)$.
We denote by $A(s)$ the last letter returned by an automaton when executed on
$s$. 
An execution can be more formally stated by first extending the transition
function to strings.
The transition function is extended to the empty string $\varepsilon$ as
$\delta(q,\varepsilon) = q$ and to any non-empty string $\sigma_1 \dots
\sigma_n$ as $\delta(q,\sigma_1 \dots \sigma_n) =
\delta(\delta(q,\sigma_1), \sigma_2 \dots \sigma_n)$.
Then we have that $q_i = \delta(\sigma_1 \dots \sigma_i)$, and also that
$A(s) = \theta(\delta(q_\mathrm{init}))$.

The \emph{language recognised} by an automaton $A$ is the set of
strings $s$ such that $A(s) = 1$.
Two automata $A_1$ and $A_2$ are \emph{equivalent} if they recognise the same
language,
i.e., they are over the same input alphabet $\Sigma$, and the equality 
$A_1(s) = A_2(s)$ holds for every string $s \in \Sigma^*$.

\paragraph{Canonicity.}
A state $q \in Q$ is \emph{reachable} in an automaton $A$ from a state 
$q' \in Q$ if there exists an input string $\sigma_1 \dots \sigma_\ell$ such
that, for $q_0 = q'$ and $q_i = \delta(q_{i-1},\sigma_i)$, it holds that 
$q = q_\ell$.
Automaton $A$ is \emph{connected} if every state $q \in Q$ is reachable from the
initial state $q_0$.
For any state $q$ of automaton $A$, the automaton $A^q$ is the
automaton obtained by setting $q$ to be the initial state.
Two states $q$ and $q'$ of $A$ are \emph{equivalent} if the automata
$A^q$ and $A^{q'}$ are equivalent.
An automaton is in \emph{reduced form} if it has no distinct states which are
equivalent. 
An automaton is \emph{canonical} if it is connected and in reduced form.

\subsection{Semiautomata Cascades}

We introduce semiautomata cascades, c.f.\ Chapter~2 of \cite{domosi2005algebraic}.

A \emph{semiautomata cascade} is a semiautomaton
$A = \langle \Sigma, Q, \delta \rangle$
given by the composition of $d$ semiautomata
$A_1, \dots, A_d$ with connection functions 
$\phi_1, \dots, \phi_d$
as described next.
Every semiautomaton is of the form
$A_i = \langle \Sigma_i, Q_i, \delta_i \rangle$.
Every connection function is of the form $\phi_i: \Sigma \times Q_1 \times \dots
\times Q_{i-1} \to \Sigma_i$.
The state space of the cascade is
\begin{align*}
  Q & = Q_1 \times \dots \times Q_d,
\end{align*}
and its transition function is
\begin{align*}
  \delta(\langle q_1, \dots, q_d \rangle,\sigma) =
  \langle \delta_1(q_1, \sigma_1), \dots, \delta_d(q_d, \sigma_d) \rangle,
  \qquad
  \text{ where } \sigma_i = \phi_i(\sigma, q_1, \dots, q_{i-1}).
\end{align*}
We write the cascade as $(\phi_1, A_1) \ltimes \dots \ltimes 
(\phi_d, A_d)$, and we call each pair $(\phi_i, A_i)$ a \emph{component} of
the cascade.

\subsection{Additional Background on Semigroup and Group Theory}

We introduce additional background for Semigroup and Group Theory, cf.\
\cite{domosi2005algebraic,gallian2021contemporary}.

If $\psi$ is a homomorphism from a semigroup $S$ to a semigroup $T$ and 
$\psi$ is surjective, we say that $T$ is a \emph{homomorphic image} of $S$.
If $\psi$ is a homomorphism from a semigroup $S$ to a semigroup $T$ and 
$\psi$ is bijective, we call $\psi$ an \emph{isomorphism}. 
A semigroup $S$ \emph{divides} a semigroup $T$ if $T$ has a subsemigroup $T'$
such that $S$ is a homomorphic image of $T'$.

If $G$ is \emph{simple}, then every homomorphic image of $G$ is isomorphic to
$\{ e \}$ or $G$.

For $G$ and $H$ semigroups, we write 
$GH = \{ g\cdot h \mid g \in G, h \in H\}$. We also write 
$H^1 = H$ and $H^k = HH^{k-1}$.
A semigroup $S$ is \emph{generated} by a semigroup $H$ if 
$S = \bigcup_n H^n$; then $H$ is called a \emph{generator} of $S$.
For $G$ a group, and $N$ a normal subgroup of $G$, we call 
$G/N = \{ g \cdot N \mid g \in G \}$ a \emph{factor group}. Note that each 
$g \cdot N$ is a left coset.
A \emph{composition series} of a group $G$ is a sequence $G_0, \dots, G_n$ where 
$G_n = G$ and every $G_i$ is a maximal proper normal subgroup of $G_{i+1}$ for 
$i \in [0,n-1]$. Necessarily, $G_0$ is the trivial group. In standard notation, 
\begin{align*}
  \{ e \}  = G_0 \triangleleft G_1 \triangleleft \dots \triangleleft G_{n-1}
  \triangleleft G_n = G.
\end{align*}
For every $i \in [1,n]$,
the factor group $G_i/G_{i-1}$ is called a \emph{factor of the series}.
A \emph{solvable group} can be equivalently defined as a group with a composition
series all of whose factors are cyclic groups of prime order.

\subsection{Algebraic Automata Theory}

We introduce the necessary background on Algebraic Automata Theory, following
\cite{domosi2005algebraic}.

\paragraph{Connecting automata and semigroups.}
The set of transformations of a semiautomaton 
$\langle \Sigma, Q, \delta \rangle$
consists of every transformation $\delta_\sigma(q) = \delta(q,\sigma)$ for
$\sigma \in \Sigma$.
We consider semigroups consisting of transformations, where the
associative binary operation is function composition.
The \emph{characteristic semigroup} $\charsemigroup(A)$ of a semiautomaton
$A$ is the semigroup generated by its transformations. The characteristic
semigroup of an automaton is the one of its semiautomaton.
Given a semigroup $S$, the semiautomaton defined by $S$ is 
$A_S = \langle S, S, \delta \rangle$
where $\delta(s_1,s_2) = s_1 \cdot s_2$.
Note that $\charsemigroup(A_S) = S$.
A \emph{permutation semiautomaton} is a semiautomaton whose transformations
are permutations.
A \emph{group-free semiautomaton} is a semiautomaton such that only trivial
groups divide its characteristic semigroup. 

\paragraph{Homomorphic representations.}
\todo{Homomorphism may include inputs.}
A surjective function 
$\psi : Q \to Q'$ is a \emph{homomorphism}
of $A = \langle \Sigma, Q, \delta \rangle$ onto 
$A' = \langle \Sigma, Q', \delta' \rangle$ if, for every 
$\sigma \in \Sigma, q \in Q$, it holds that $\psi(\delta(q,\sigma)) =
\delta'(\psi(q),\sigma)$.
In this case, we say that $A'$ is a \emph{homomorphic image} of $A$.
If $\psi$ is bijective, then $A$ and $A'$ are said to be \emph{isomorphic}.
Given a semiautomaton $A = \langle \Sigma, Q, \delta \rangle$, a
semiautomaton $A' = \langle \Sigma', Q', \delta' \rangle$ is a 
\emph{subsemiautomaton} of $A$ if $\Sigma' \subseteq \Sigma$, $Q' \subseteq Q$,
and $\delta'$ is the restriction of $\delta$ to $\Sigma' \times Q'$ satisfies
$\delta'(q',\sigma') \in Q'$ for every $q' \in Q', \sigma' \in \Sigma'$.
If semiautomaton $A_1$ is the homomorphic image of a subsemiautomaton of a
semiautomaton $A_2$, then we say that $A_1$ is \emph{homomorphically
represented} by $A_2$.

\paragraph{Decomposition theorems.}
First we state the Jordan–H\"{o}lder decomposition theorem, which provides a way
to decompose semiautomata defined by groups. It is a rephrasing of Theorem~1.19
of \cite{domosi2005algebraic}, which is stated in terms of the wreath product.

\begin{theorem}[Jordan–H\"{o}lder decomposition theorem]
  \label{theorem:jordan_holder}
  If $G$ is a group, with composition series
  $G_0 \triangleleft G_1 \triangleleft \dots \triangleleft G_{n-1}
  \triangleleft G_n$,
  then the semiautomaton defined by $G$ is homomorphically represented by a
  semiautomata cascade $(\phi_1, A_1) \ltimes \dots \ltimes (\phi_n, A_n)$
  where $A_i$ is the semiautomaton defined by the simple group $G_i/G_{i-1}$.
\end{theorem}

Then we state the Krohn-Rhodes decomposition theorem, which applies to all
semiautomata---cf.\ Theorem~3.1 of \cite{domosi2005algebraic}.

\begin{theorem}[Krohn-Rhodes decomposition theorem]
  \label{theorem:krohn_rhodes}
  Let $A$ be a semiautomaton, and let $G_1, \dots, G_n$ be the simple groups
  that divide the characteristic semigroup $\charsemigroup(A)$.
  Then $A$ can be represented homomorphically by a semiautomata cascade 
  $(\phi_1, A_1) \ltimes \dots \ltimes (\phi_d, A_d)$
  where every $A_i$ is defined by some
  semigroup in $\{ F, G_1, \dots, G_n \}$ where $F$ is the flip-flop monoid.
  When $A$ is a permutation semiautomaton, then $F$ may be excluded.
Conversely, let $(\phi_1, A_1) \ltimes \dots \ltimes (\phi_d, A_d)$ be a
  semiautomata cascade that homomorphically represents a semiautomaton $A$. If a
  subsemigroup $S$ of the flip-flop monoid $F$ or a simple group $S$ divides
  $\charsemigroup(A)$, then $S$ divides $\charsemigroup(A_i)$ for some 
  $i \in [1,n]$. 
\end{theorem}

\medskip
\section{Preliminary Results}
\smallskip

\label{sec:appendix-preliminaries}

We prove two preliminary results (Proposition~\ref{prop:homomorphism} and
Proposition~\ref{prop:homomorphism_converse}) which allow us to take advantage of the
notion of homomorphic representation.
The results are novel, even though they share ideas with results that can be
found in the literature, e.g., \cite{ginzburg,arbib1969theories}.

\subsection{Homomorphic Representation implies Equivalence}

We state and prove Proposition~\ref{prop:homomorphism}.

\begin{restatable}{proposition}{prophomomorphism}
  \label{prop:homomorphism}
  If a semiautomaton $D_1$ homomorphically represents a semiautomaton 
  $D_2$, then every automaton with semiautomaton $D_2$ is equivalent to an
  automaton with semiautomaton $D_1$.
\end{restatable}
\begin{proof}
  Let us consider semiautomata $D_1$ and $D_2$.
  Assume that $D_1$ homomorphically represents $D_2$.
  It follows that there is a subsemiautomaton $D_1'$ of $D_1$ along with
  a homomorphism $\psi$ of $D_1'$ onto $D_2$. 
  Let 
  $D_1' = \langle \Sigma_1, Q_1, \delta_1 \rangle$
  and
  $D_2 = \langle \Sigma_2, Q_2, \delta_2 \rangle$,
  and let us consider an arbitrary acceptor $A_2$ having semiautomaton $D_2$,
  \begin{align*}
    A_2 = \langle \Sigma_2, Q_2, \delta_2, q^2_\mathrm{init}, \Gamma, \theta_2
    \rangle.
  \end{align*}
  Let us choose $q^1_\mathrm{init} \in Q_1$ such that 
  $\psi(q^1_\mathrm{init}) = q^2_\mathrm{init}$. 
  It exists because $\psi$ is surjective.
  We construct the automaton
  \begin{align*}
    A_1 = \langle \Sigma_2, Q_1, \delta_1, q^1_\mathrm{init}, \Gamma, \theta_1 
    \rangle,
  \end{align*}
  where the output function is
  $\theta_1(q) = \theta_2(\psi(q))$.
  Then we show that $A_1$ and $A_2$ are equivalent. 
  We consider the execution of $A_1$ and $A_2$ on an arbitrary input string
  $w = \sigma_1 \dots \sigma_n$. It suffices to show that 
  the last output of $A_1$ on $w$ equals the last output of $A_2$ on $w$.

  Let $q^1_0, \dots, q^1_n$ and $q^2_0, \dots, q^2_n$ be the 
  sequences of states for $A_1$ and $A_2$, respectively, in the considered
  execution on $w$.
  Namely, $q^1_0 = q^1_\mathrm{init}$ and 
  $q^1_i = \delta_1(q^1_{i-1}, \sigma_i)$ for $1 \leq i \leq n$.
  Similarly, $q^2_0 = q^2_\mathrm{init}$ and 
  $q^2_i = \delta_2(q^2_{i-1}, \sigma_i)$ for $1 \leq i \leq n$.

  As an auxiliary result, we show that
  $q^2_i = \psi_2(q^1_i)$ for every $0 \leq i \leq n$.
  We show it by induction on $i$.

  In the base case $i=0$ and $q^2_0 = \psi(q^1_0)$  amounts to
  $q^2_\mathrm{init} = \psi(q^1_\mathrm{init})$, which holds by definition.

  In the inductive case $i > 0$ and we assume that 
  $q^2_{i-1} = \psi(q^1_{i-1})$.
  We have to show $q^2_i = \psi(q^1_i)$. By the definition of $q^1_i$ and
  $q^2_i$ above, the former can be rewritten as 
  \begin{align*}
    \delta_2(q^2_{i-1},\sigma_i) = \psi(\delta_1(q^1_{i-1},\sigma_i)).
  \end{align*}
  Let us swap l.h.s.\ and r.h.s.,
  \begin{align*}
    \psi(\delta_1(q^1_{i-1},\sigma_i)) = \delta_2(q^2_{i-1},\sigma_i).
  \end{align*}
  Using the inductive hypothesis
  $q^2_{i-1} = \psi(q^1_{i-1})$, we obtain
  \begin{align*}
    \psi(\delta_1(q^1_{i-1},\sigma_i)) 
    = \delta_2(\psi(q^1_{i-1}),\sigma_i),
  \end{align*}
  which holds since $\psi$ is a homomorphism of $D_1$ onto $D_2$. This proves
  the auxiliary results.

  We have that 
  $A_1$ accepts $\sigma_1 \dots \sigma_n$ iff $\theta_1(q^1_n) = 1$, and
  similarly $A_2$ accepts $\sigma_1 \dots \sigma_n$ iff $\theta_2(q^2_n) = 1$.
  We have that 
  \begin{align*}
    \theta_1(q^1_n) =
    \theta_2(\psi(q^1_n)) = 
    \theta_2(q^2_n),
  \end{align*}
  where the first equality holds by definition of $\theta_1$, and the second
  equality holds by the auxiliary result.
  Therefore, $A_1$ and $A_2$ are equivalent.
  Now, $A_1$ has semiautomaton $D_1'$, rather than $D_1$ as required by the
  statement of the proposition.
  However, it is clear that $A_1$ can be extended to an equivalent automaton
  $A_1'$ that has semiautomaton $D_1$, since the states of $D_1$ that are not in
  $D_1'$ are not reachable from the initial state of $A_1$, and hence have no
  effect on the output.
  This proves the proposition.
\end{proof}

\subsection{Equivalence and Canonicity imply Homomorphic Representation}

In this section we state and prove Proposition~\ref{prop:homomorphism_converse}.
We first provide some preliminary definitions and propositions.

\begin{definition}
  Consider a language $L$ over $\Sigma$. 
  Two strings $x,y \in \Sigma^*$ are in relation $x \sim_L y$ if and only if the
  equivalence $(xz \in L) \Leftrightarrow  (yz \in L)$ holds for every 
  string $z \in \Sigma^*$.
  The resulting set of equivalence classes is written as $\Sigma^*/{\sim_L}$,
  and the equivalence class of a string $x \in \Sigma^*$ is written as $[x]_L$.
\end{definition}

\begin{definition}
  Consider an automaton $A$.
  Two input strings $x,y$ are in relation $x \sim_A y$ iff
  the state of $A$ upon reading $x$ and $y$ is the same.
  The resulting set of equivalence classes is written as $\Sigma^*/{\sim_A}$,
  and the equivalence class of a string $x \in \Sigma^*$ is written as $[x]_A$.
\end{definition}

\begin{proposition} \label{prop:statesim-implies-funcsim}
  For every automaton $A$ that recognises a language $L$,
  it holds that $x \sim_A y$ implies $x \sim_L y$.
\end{proposition}
\begin{proof}
  Assume $x \sim_A y$, i.e., the two strings lead to the same state $q$ of $A$.
  For every string $z$, the output of $A$ on both $xz$ and $yz$ is $A^q(z)$. 
  Hence, $A(xz) = A^q(z)$ and $A(yz) = A^q(z)$. 
  Since $A$ recognises $L$, it follows that $xz \in L \Leftrightarrow yz \in L$,
  and hence $x \sim_L y$.
\end{proof}

\begin{proposition} \label{prop:equivalence-classes-for-canonical-system}
  Consider a canonical automaton $A$ that recognises a language $L$ over
  an alphabet $\Sigma$.
  The states of $A$ are in a one-to-one correspondence with the equivalence
  classes $\Sigma^*/{\sim_L}$. 
  State $q$ corresponds to the equivalence class $[x]_L$ for $x$ any string that
  leads to $q$.
\end{proposition}
\begin{proof}
  Let $\psi$ be the mentioned correspondence.
First, $\psi$ maps every state to some equivalence class, since every state is
  reachable, because $A$ is canonical.
Second, $\psi$ maps every state to at most one equivalence class, by
  Proposition~\ref{prop:statesim-implies-funcsim}.
Third, $\psi$ is surjective since every equivalence class $[x]_L$ is assigned
  by $\psi$ to the the state that is reached by $x$.
Fourth, $\psi$ is injective since there are no distinct states
  $q,q'$ such that such that $[x]_L = [x']_L$ for $x$ leading to $q$ and $x'$
  leading to $q'$.
  The equality $[x]_L = [x']_L$ holds only if $x \sim_L x'$, which holds only if  
  $xz \in L \Leftrightarrow x'z \in L$ for every non-empty $z$, which holds only
  if $A^q = A^{q'}$ since $A$ recognises $L$. Since $A$ is canonical, and hence
  in reduced form, we have that $A^q = A^{q'}$ does not hold, and hence the
  $\psi$ is injective.
Therefore $\psi$ is a one-to-one correspondence as required.
\end{proof}

\begin{proposition} \label{prop:equivalence-classes-for-connected-system}
  Consider a connected automaton $A$ on input alphabet $\Sigma$.
  The states of $A$ are in a one-to-one correspondence with the equivalence
  classes $\Sigma^*/{\sim_A}$.
  State $q$ corresponds to the equivalence class $[x]_A$ for $x$ any string that
  leads to $q$.
\end{proposition}
\begin{proof}
  Let $\psi$ be the mentioned correspondence.
First, $\psi$ maps every state to some equivalence class, since every state is
  reachable, because $A$ is connected.
Second, $\psi$ maps every state to at most one equivalence class, by
  the definition of the equivalence classes $\Sigma^*/{\sim_A}$.
Third, $\psi$ is surjective since every equivalence class $[x]_A$ is assigned
  by $\psi$ to the state that is reached by $x$.
Fourth, $\psi$ is injective since there are no distinct states $q,q'$ such
  that $[x]_A = [x']_A$ for $x$ leading to $q$ and $x'$ leading to $q'$.
  The equality $[x]_A = [x']_A$ holds only if $x \sim_A x'$, which holds only if  
  $x$ and $x'$ lead to the same state.
Therefore $\psi$ is a one-to-one correspondence as required.
\end{proof}

\begin{restatable}{proposition}{prophomomorphismconverse}
  \label{prop:homomorphism_converse}
  If an automaton $A_1$ is equivalent to a canonical automaton $A_2$, then the
  semiautomaton of $A_1$ homomorphically represents the semiautomaton of $A_2$.
\end{restatable}
\begin{proof}
  Consider an automaton $A_1$ and a canonical automaton $A_2$.
  \begin{align*}
    A_1 & = \langle \Sigma, Q_1, \delta_1, q_1^\mathrm{init}, \Gamma, \theta_1
    \rangle
    \\
    A_2 & = \langle \Sigma, Q_2, \delta_2, q_2^\mathrm{init}, \Gamma, \theta_2
    \rangle
  \end{align*}
  Assume that the two automata are equivalent, i.e., they recognise the same
  language $L$.
  By Proposition~\ref{prop:equivalence-classes-for-canonical-system},
  every state of $A_2$ can be seen as an equivalence class $[x]_L$.
  Let $A_1'$ be the reachable subautomaton of $A_1$, and let $D_1'$ be its
  semiautomaton.
  \begin{align*}
    D_1' & = \langle \Sigma, Q_1', \delta_1' \rangle
  \end{align*}
  By Proposition~\ref{prop:equivalence-classes-for-connected-system}, every
  state of $A_1'$ can be seen as an equivalance class $[x]_{A_1}$.
  Let us define the function $\psi$ that maps $[x]_{A_1}$ to $[x]_L$.
  We have that $\psi$ is a well-defined function, i.e., it does not assign
  multiple values to the same input, by
  Proposition~\ref{prop:statesim-implies-funcsim} since $A_1$ recognises $L$.

  We argue that $\psi$ is a surjective function as required by the definition of
  homomorphism.
  The function is surjective since every state $q$ in $A_2$ is reachable, hence
  there is $x$ that reaches it, and hence $\psi$ maps $[x]_{A_1}$ to 
  $[x]_L = q$.

  Having argued the properties above, in order to show that $\psi$ is
  a homomorphism from $D_1'$ to $D_2$, it suffices to show that, for every $q
  \in Q_1'$ and $\sigma \in \Sigma$, the following equality holds.
  \begin{align*}
    \psi\big(\delta_1'(q,\sigma)\big) = \delta_2\big(\psi(q),\sigma\big)
  \end{align*}
  Let us consider arbitrary
  $q \in Q_1'$ and $\sigma \in \Sigma$.
  Let $x \in \Sigma^*$ be a string that reaches $q$ in $A_1'$.
  Note that $q$ can be seen as the equivalence class $[x]_{A_1}$.
  We have the following equivalences:
  \begin{align*}
    & \psi(\delta_1'(q,\sigma)) = \delta_2(\psi(q),\sigma)
    \\
    & \Leftrightarrow \psi(\delta_1'([x]_{A_1},\sigma)) =
    \delta_2(\psi([x]_{A_1'}),\sigma)
    \\
    & \Leftrightarrow \psi([x\sigma]_{A_1}) = \delta_2(\psi([x]_{A_1}),\sigma)
    \\
    & \Leftrightarrow [x\sigma]_L = \delta_2(\psi([x]_{A_1}),\sigma)
    \\
    & \Leftrightarrow [x\sigma]_L = \delta_2([x]_L,\sigma)
    \\
    & \Leftrightarrow [x\sigma]_L = [x\sigma]_L.
  \end{align*}
  The last equality holds trivially, and hence
  $\psi$ is a homomorphism from $D_1'$ to $D_2$. Since $D_1'$ is
  a subsemiautomaton of $D_1$, we conclude that $D_1$ homomorphically represents
  $D_2$.
  This proves the proposition.
\end{proof}

\iflogicprograms
  \subsection{Prime Decomposition of Transformation Logic Programs}
\else
  \subsection{Prime Decomposition Theorem for the Transformation Logics}
\fi

We introduce a decomposition theorem for the Transformation Logics, along the
lines of the Krohn-Rhodes decomposition theorem.
The theorem is in two parts. The first part
(Theorem~\ref{theorem:prime_decomposition_of_transformation_logics}) shows how
to decompose a transformation \wrule{} into a \wprogram{} over prime operators.
The second part
(Theorem~\ref{theorem:prime_decomposition_of_transformation_logics_converse})
shows that prime operators are indeed prime, i.e., they cannot be decomposed
into other prime operators.

We first introduce some preliminary definitions.

\begin{definition}
  Given an operator $\mathcal{T} = \langle X, T, \phi, \psi \rangle$,
  its \emph{characteristic semigroup} $\mathbf{S}(\mathcal{T})$ 
  is the semigroup generated by $T$.
\end{definition}

\begin{definition}
  An operator is a \emph{permutation operator} if all its transformations are
  bijections.
\end{definition}

\begin{definition}
  A \wprogram{} $P_1$ is captured by a \wprogram{} $P_2$ if the \wprogram{}s have
  the same set of input variables and, for every variable $q$ defined in $P_1$,
  every input $I$ to $P_1$, and every index $t$ over the positions of $I$, 
  it holds that $(P_1, I, t) \models q$ iff $(P_2, I, t) \models q$.
  A \wrule{} $r$ is captured by a \wprogram{} $P$ if the singleton \wprogram{} 
  $\{ r \}$ is captured by $P$.
\end{definition}

\begin{restatable}{theorem}{theoremkrohnrhodes}
  \label{theorem:prime_decomposition_of_transformation_logics}
  Let $r$ be a transformation \wrule{} with operator $\mathcal{T}$
  and let $G_1, \dots, G_n$ be the simple groups
  that divide $\mathbf{S}(\mathcal{T})$.
  Then $r$ is captured by a \wprogram{} with prime operators defined by
  $\{ F, G_1, \dots, G_n \}$ where $F$ is the flip-flop monoid.
  Moreover, if $\mathcal{T}$ is a permutation operator,
  then $F$ may be excluded.
\end{restatable}
 \begin{proof}[Proof idea]
   We start by mapping the given \wrule{} $r$ to an automaton $A$.
   Then, we obtain a prime decomposition $A'$ for $A$, using the Krohn-Rhodes
   decomposition theorem for automata.
   Finally, we map $A'$ to a \wprogram{}, making sure every component of the cascade
   maps to a transformation operator with the same characteristic semigroup.
 \end{proof}
\begin{proof}
  Let $\mathbf{P} = \{ G_1, \dots, G_n \}$ if $\mathcal{T}$ is a permutation
  operator, and $\mathbf{P} = \{ F, G_1, \dots, G_n \}$ otherwise.
  Let $\mathcal{T} = \langle X, T, \phi, \psi \rangle$, and let $m$ and $n$ be
  its input and output arity, respectively.
  Let \wrule{} $r$ be of the form
  \begin{align*}
    p_1, \dots, p_n \ruleif \mathcal{T}(a_1, \dots, a_m \mid x_0).
  \end{align*}
  Let $V = \{a_1, \dots, a_m \}$ and let $U = \{ p_1, \dots, p_n \}$.
  Let $P$ be the singleton \wprogram{} $\{ r \}$.
  We construct an automaton $A$ that is equivalent to $P$ and it has the same
  characteristic semigroup as $\mathcal{T}$.
Automaton $A$ is defined as $\langle \Sigma, Q, \delta, q_\mathrm{init}, \Gamma, \theta \rangle$ where 
  the input alphabet is $\Sigma = 2^V$,
  the set of states is $Q = X$, 
  the transition function is defined as $\delta(q,\sigma) = \tau(q)$ with
  $\tau = \phi(b_1, \dots, b_m)$ where $\langle b_1, \dots, b_m \rangle$ is the
  indicator vector of the set of variables $\sigma$, 
  the initial state is $q_\mathrm{init} = x_0$,
  the output alphabet is $\Gamma = 2^U$, and
  the output function $\theta(x)$ returns the set of variables indicated by
  $\psi(x)$. \todo{define what it means that a vector indicates a set.}
We have that $A$ is equivalent to $P$ in the sense that, for every input 
  $I = I_1, \dots, I_\ell$ to $P$ and every $t \in [1,\ell]$, it holds that 
  $(P,I,t) \models p_i$ iff $p_i \in A(I_{1:t})$ with 
  $I_{1:t} = I_1 \dots I_t$.
  \todo{A proof of equivalence here would be required.}

  By the Krohn-Rhodes decomposition theorem
  (Theorem~\ref{theorem:krohn_rhodes}), there is a semiautomata cascade
  $C = (\phi_1, A_1) \ltimes \dots \ltimes (\phi_d, A_d)$ that
  homomorphically represents the semiautomaton of $A$ such that 
  every $A_i$ is defined by a semigroup in $\mathbf{P}$.

  By Proposition~\ref{prop:homomorphism}, there exists an automaton 
  $A' = \langle \Sigma, Q_C, \delta_C, q_\mathrm{init}^C, \Gamma, \theta'
  \rangle$ with semiautomaton $C$
  such that $A'$ is equivalent to $A$, and hence to $P$.
  In particular, its states are 
  $Q_C = Q_1 \times \dots \times Q_d$
  and its initial state is 
  $q_\mathrm{init}^C = \langle q_\mathrm{init}^1, \dots, q_\mathrm{init}^d
  \rangle$.
  We construct a \wprogram{} $P'$ that is equivalent to $A'$ (hence to $P$) and it
  has operators defined by $\mathbf{P}$.

  First, by induction on the number of components $d$ of the cascade $C$, we
  construct a \wprogram{} $P_C$ that tracks the state of the casacade $C$, in the
  following sense. 
  Let $Q_i = \{ q_{i,1}, \dots, q_{i,m_i} \}$ for every $i \in [1,d]$.
  We use every state $q_{i,1}$ also as a propositional variable to indicate the
  state.
  Then, 
  $P_C$ tracks the state of the casacade $C$ in the sense that
  $(P_C, I, t) \models q_{i,j}$ iff $q_{i,j}$ is the $i$-th component of the
  state of the cascade $C$ when executed on input $I_{1:t}$.

  In the base case we have $C = (\phi_1, A_1)$.
  Let us recall that the input alphabet of the cascade is $\Sigma = 2^V$.
  Also that $A_1$ is semigrouplike, and hence it is of the form
  $A_1 = \langle Q_1, Q_1, \delta_1 \rangle$.
  Thus $\phi_1$ is of the form $\phi_1 : \Sigma \to Q_1$.
Let $\mathcal{A}_1 = \langle Q_1, T_1, \xi_1, \psi_1 \rangle$
  be the semigrouplike operator defined by the characteristic semigroup of
  $A_1$.
  We have that $T_1$ coincides with the set of transformations of $A_1$.
  Note that the functions $\xi_1$ and $\psi_1$ are given, as discussed in the
  main body. In particular we, have that 
  $\xi_1 : \mathbb{B}^{k_1} \to Q_1$ is surjective, 
  and $\psi_1 : Q_1 \to \mathbb{B}^{\ell_1}$ is injective.
Now, every transformation of $T_1$ corresponds to an element $x \in Q_1$.
  Since $\xi_1$ is surjective, for every $x \in Q_1$, there exists a
  binary vector $\vec{b}$ such that $\xi_1(\vec{b}) = x$. Fixing an arbitrary
  choice of the binary vector $\vec{b}$ to return for $x$, let us define the
  `inverse' function $\xi_1^{-1}(x) =\vec{b}$.
  Thus, we can define the composition $f = \phi_1 \circ \xi_1^{-1}$ which is a
  function of the form $f : \Sigma \to \mathbb{B}^{k_1}$.
Thus, on input $\sigma \in \Sigma$, semiautomaton $A_1$ receives 
  $\phi_1(\sigma) = x$ and it applies the transformation $x$,
  and also the operator $\mathcal{A}_1$ applies
  the same transformation $x$ since
  $(\phi_1 \circ \xi^{-1} \circ \xi)(\sigma) = \phi_1(\sigma) = x$.
Furthermore, $f$ is a function from sets of variables
  $\Sigma = 2^V$ to Boolean tuples, and hence each
  of its projetions $f_i$ can be seen as Boolean function, which can be
  represented by a propositional formula $\varphi_{1,i}$ over variables $V$.
  Let $a_{1,1}, \dots, a_{1,k_1}$ and $b_{1,1}, \dots, b_{1,k_1}$ be fresh
  variables.
  We introduce the static \wrule{}s $r_{1,i}$ defined as
  \begin{align*}
    a_{1,i} & \ruleif \varphi_{1,i}
  \end{align*}
  and the transformation \wrule{} $r_{A_1}$ defined as
  \begin{align*}
    b_{1,1}, \dots, b_{1,k_1} \ruleif 
    \mathcal{A}_1(a_{1,1}, \dots, a_{1,k_1} \mid q_\mathrm{init}^1).
  \end{align*}
Now, based on $\psi_1$, we can introduce \wrule{}s for the variables 
  $Q_1 = \{ q_{1,1}, \dots, q_{1,m_1} \}$.
  Since $\psi_1$ is injective,
  there exists a map $g_1 : \mathbb{B}^{\ell_1} \to \mathbb{B}^{m_1}$ such that
  $g_1$, for every $q_{i,j} \in Q_1$, we have that $g_1(\psi_1(q_{i,j}))$ is the
  one-hot encoding of $q_{i,j}$ within $Q_1$.
  We introduce the static \wrule{}s $r_{q_{1,i}}$ defined as
  \begin{align*}
    q_{1,i} & \ruleif \beta_{1,i}
  \end{align*}
  where $\beta_{1,i}$ is a propositional formula that represents the $i$-th
  projection of $g_1$.
Then we have that the \wprogram{} 
  $P_C = \{ r_{A_1}, r_{a_{1,1}}, \dots,  r_{a_{1,k_1}}, r_{q_{1,1}}, \dots,
  r_{q_{1,m_1}} \}$ tracks the state of the cascade $C$.
  \todo{A proof of equivalence here would be required.}

  In the inductive case we have
  $C = (\phi_1, A_1) \ltimes \dots \ltimes (\phi_{d-1}, A_{d-1}) \ltimes 
  (\phi_d, A_d)$ with $d \geq 2$, and 
  we assume that we have a \wprogram{} $P_{C'}$ for the cascade 
  $C' = (\phi_1, A_1) \ltimes \dots \ltimes (\phi_{d-1}, A_{d-1})$.
  Recall that the state of the cascade $C'$ is represented in $P_{C'}$ by
  variables $q_{i,j}$.
  We have that every component in a cascade reads the state of the previous
  components before it is updated. In order to capture this aspect,
  we introduce:
  \par\smallskip\noindent
  (i) fresh variables $Q_i' = \{ q'_{i,1}, \dots, q'_{i,m_i} \}$ and
  the delay \wrule{}s 
  \begin{align*} 
    q_{i,j}' \ruleif \operatorname{\mathcal{D}} q_{i,j}
  \end{align*}
  (ii) fresh variables $Q_i'' = \{ q''_{i,1}, \dots, q''_{i,m_i} \}$ and the
  following static \wrule{}s for $i,j$ satisfying $q_{i,j} \neq q_\mathrm{init}^i$,
  \begin{align} \label{eq:state-definitions-1}
    q_{i,j}'' \ruleif q_{i,j}'
  \end{align}
  (iii) the following static \wrule{}s for $i,j$ satisfying 
  $q_{i,j} = q_\mathrm{init}^i$,
  \begin{align} \label{eq:state-definitions-2}
    q_{i,j}'' \ruleif q_{i,j}' \lor (\neg q_{1,i}', \land \dots \land \neg
    q_{i,m_i}').
  \end{align}
  Note that, at the first time point, every $q_{i,j}'$ is false by the semantics
  of the delay operator, and hence \wrule{}s~\eqref{eq:state-definitions-1}
  and~\eqref{eq:state-definitions-2} ensure that $q_{i,j}''$ holds iff 
  $q_{i,j} = q_\mathrm{init}^i$. At later time points, some $q_{i,j}'$ holds,
  and hence \wrule{}~\eqref{eq:state-definitions-2} is triggerd only if its
  first disjunct $q_{i,j}'$ holds.
The rest of the inductive case proceeds similarly to the base case, with the
  only difference that the relevant variables are now 
  $V \cup Q_1'' \cup \dots \cup Q''_{d-1}$, where $V$ are the input variables,
  and $Q_1 \cup \dots \cup Q_{d_1}$ track the state of $C'$ before it is
  updated.
  The additional variables have no impact on the argument, as they can now be
  seen as input variables.
  Thus, following the same arguments as above, we introduce the static \wrule{}s 
  \begin{align*}
    a_{d,i} & \ruleif \varphi_{d,i}
  \end{align*}
  the transformation \wrule{} 
  \begin{align*}
    b_{d,1}, \dots, b_{d,k_d} \ruleif 
    \mathcal{A}_d(a_{d,1}, \dots, a_{d,k_d} \mid q_\mathrm{init}^d),
  \end{align*}
  and the static \wrule{}s  
  \begin{align*}
    q_{d,i} & \ruleif \beta_{d,i}
  \end{align*}
  Then we have that the \wprogram{} $P_C$ obtained by extending $P_{C'}$ with the
  \wrule{}s above tracks the state of the cascade $C$. 
  \todo{A proof of equivalence would be required.}

  Finally, to obtain the \wprogram{} $P'$ that captures the initial \wprogram{} $P$, it
  is enough to capture the output function $\theta'$ of the automaton $A'$.
  We have that the function is of the form $\theta: Q_1 \times \dots \times Q_d
  \to \Gamma$ where $\Gamma = 2^U$ and $U = \{ p_1, \dots, p_n \}$.
  We construct one static \wrule{} $r_{p_i}$ for every $p_i$ as
  \begin{align*}
    p_i \ruleif \alpha_i
  \end{align*}
  where $\alpha_i$ is a propositional formula on variables 
  $Q_1 \cup \dots \cup Q_d$ that is true iff 
  $p_i \in \theta(q_1, \dots, q_d)$, and it exists because every
  Boolean function admits a propositional formula.
  \todo{We could clarify how the variables capture the input to the functions.}
  Then the \wprogram{} $P' = P_C \cup \{ r_{p_1}, \dots, r_{p_n} \}$ captures $P$. 
  \todo{A proof of equivalence would be required.}
\end{proof}

\begin{restatable}{theorem}{theoremkrohnrhodesconverse}
  \label{theorem:prime_decomposition_of_transformation_logics_converse}
  If a transformation \wrule{} $r$
  with prime operator $\mathcal{P}$ is captured by
  a \wprogram{} with operators $\{ \mathcal{T}_1, \dots,
  \mathcal{T}_n \}$, then $\mathbf{S}(\mathcal{P})$ divides one of the
  semigroups in $\{ F, \mathbf{S}(\mathcal{T}_1), \dots,
  \mathbf{S}(\mathcal{T}_n) \}$ where $F$ is the flip-flop monoid. 
\end{restatable}
\begin{proof}[Proof idea]
  We start by mapping the given \wrule{} $r$ to an automaton $A$.
  Second, we map the given \wprogram{} $P$ to an equivalent cascade $C$, ensuring
  that every operator maps to a component having the same characteristic
  semigroup.
  In particular, we have that $C$ captures $A$, and hence 
  the semigroup of $A$ divides the semigroup of a single component of $C$,
  by the converse of the Krohn-Rhodes theorem.
  We conclude that the characteristic semigroup of $\mathcal{P}$ divides the
  characteristic semigroup of some operator $\mathcal{T}_i$.
\end{proof}
\begin{proof}
  Let \wrule{} $r$ be of the form
  \begin{align*}
    p_1, \dots, p_n \ruleif \mathcal{P}(a_1, \dots, a_m \mid x_0).
  \end{align*}
  Let $\mathcal{P} = \langle X, T, \phi, \psi \rangle$, and let $m$ and $n$ be
  its input and output arity, respectively.
  Let $V = \{a_1, \dots, a_m \}$ and let $U = \{ p_1, \dots, p_n \}$.
  Let $P$ be the singleton \wprogram{} $\{ r \}$.
  First we construct an automaton $A$ that is equivalent to $P$ and it has the
  same characteristic semigroup as $\mathcal{P}$.
Automaton $A$ is defined as 
  $\langle \Sigma, Q, \delta, q_\mathrm{init}, \Gamma, \theta \rangle$ where 
  the input alphabet is $\Sigma = 2^V$,
  the set of states is $Q = X$, 
  the transition function is defined as $\delta(q,\sigma) = \tau(q)$ with
  $\tau = \phi(b_1, \dots, b_m)$ where $\langle b_1, \dots, b_m \rangle$ is the
  indicator vector of the set of variables $\sigma$, 
  the initial state is $q_\mathrm{init} = x_0$,
  the output alphabet is $\Gamma = 2^U$, and
  the output function $\theta(x)$ returns the set of variables indicated by
  $\psi(x)$. \todo{define what it means that a vector indicates a set.}
We have that $A$ is equivalent to $P$ in the sense that, for every input 
  $I = I_1, \dots, I_\ell$ to $P$ and every $t \in [1,\ell]$, it holds that 
  $(P,I,t) \models p_i$ iff $p_i \in A(I_{1:t})$ with 
  $I_{1:t} = I_1 \dots I_t$.
  \todo{A proof of equivalence here would be required.}

  Let $P'$ be the \wprogram{} that captures $P$, with operators in 
  $\{ \mathcal{T}_1, \dots, \mathcal{T}_n \}$.
  We build a semiautomata casacade 
  $C = (\phi_1, A_1) \ltimes \dots \ltimes (\phi_d, A_d)$ that captures the
  variables defined by transformation and delay \wrule{}s in $P'$.

  \todo{define what it means to capture the variables.}

  \todo{introduce the idea of a \emph{reactive semiautomata cascade}}
  We proceed by induction on the number of transformation and delay \wrule{}s in
  $P'$.
  In the base case there is no delay or transformation \wrule{}, and hence the claim
  holds trivially.
  In the inductive case, we 
  consider the \wprogram{} $P''$ obtained by removing a transformation or delay
  \wrule{} $r'$ that has no other transformation or delay \wrule{} depending on it.
  \todo{define dependency}
  We assume by induction that we have a cascade 
  $C' = (\phi_1, A_1) \ltimes \dots \ltimes (\phi_{d-1}, A_{d-1})$ that captures the
  variables defined by transformation or delay \wrule{}s in $P''$.

  We extend it to a cascade 
  $C = (\phi_1, A_1) \ltimes \dots \ltimes (\phi_{d-1}, A_{d-1}) \ltimes
  (\phi_d, A_d)$ 
  that captures $P'$.
  We consider two cases separately.

  In the first case, we have that $r'$ is a delay \wrule{}
  \begin{align*}
    p \ruleif \operatorname{\mathcal{D}} a.
  \end{align*}
  Let $f$ be the function that returns the truth value of $a$ based on the
  variables defined by delay or transformation \wrule{}s in $P''$.
  Then we have that $\phi_d(\sigma, q_1, \dots, q_{d-1})$ returns $\mathit{set}$
  if $f(x) = 1$ and $\mathit{reset}$ otherwise.
  Then $A_d$ is the flip-flop semiautomaton.
  Thus, we have that $C$ and in particular its last component $(\phi_d, A_d)$
  captures the variables $p$ defined by the delay \wrule{} since $q_d$ will be the
  previous truth value of variable $a$.

  In the second case, we have that $r'$ is a transformation \wrule{}
  \begin{align*}
    p_1, \dots, p_n \ruleif \operatorname{\mathcal{T}}(a_1, \dots, a_n \mid
    x_0).
  \end{align*}
  Automaton $A_d$ is chosen to be the semiautomaton defined by the
  characteristic semigroup of $\mathcal{T}$.
  Function $\phi_d(\sigma, q_1, \dots, q_{d-1})$ is chosen as follows.
  First, let us note that the value of variables $a_1, \dots, a_n$ is determined
  by some functions $f_1, \dots, f_n$ of the input variables $\sigma$ and of the
  variables $b_1, \dots, b_{d-1}$ defined by delay or transformation \wrule{}s in
  $P''$. 
  In particular, $f_1, \dots, f_n$ are defined by static \wrule{}s in $P''$, and
  they can be built by induction on such \wrule{}s. 
Second, let us note that the before-update states $q_1, \dots, q_{d-1}$ capture
  the value of variables $b_1, \dots, b_{d-1}$ at the previous time point, and
  hence the after-update states $q_1', \dots, q_{d-1}'$ capture the value of
  variables $b_1, \dots, b_{d-1}$ at the current time point.
Now, let $\phi: \mathbb{B}^n \to T$ be the decoding function of the
  transformation operator in the considered \wrule{}, and let us write
  $\mathbf{x}$ for the list of variables $\sigma, q_1', \dots, q_{d-1}'$.
  Then, the value of variables $a_1, \dots, a_n$ is functionally determined as 
  $f_1(\mathbf{x}), \dots, f_n(\mathbf{x})$
  starting from the input $\sigma$ and the states $q_i'$. In turn, every state
  $q_i'$ is a function of $\sigma, q_1, \dots, q_{d-1}$. In fact, if we let
  $\delta_i$ be the transition function of $A_i$ for $i \in [1,d-1]$,
  we have that every $q_i'$ is defined inductively as 
  $q'_1 = \delta_1(q_1,\sigma)$ in the base case and as 
  $q'_i = \delta_i(q_i,\langle \sigma, q'_1,\dots,q'_{d-1} \rangle)$ 
  for the inductive case $i \in [2,d-1]$.
  Therefore, we can choose $\phi_d(\sigma, q_1, \dots, q_{d-1}) = 
  \phi(f_1(\mathbf{x}), \dots, f_n(\mathbf{x}))$. 

  We have that the automaton obtained from $A$ with identity output and initial
  state $x_0$ is canonical. \todo{We need a proposition saying that automata
    defined by primes are canonical. The flip-flop is canonical by inspection.
    The simple groups are canonical (i) they are reachable because of
    permutations, (ii) any two elements are distinguishable in one step because
    of the invertibility requirement, they cannot yield the same element.}
    Then, by Proposition~\ref{prop:homomorphism_converse}, we have that $C$
    homomorphically represents $A$.
  Since $\mathcal{P}$ is prime, we have that the characteristic semigroup $S$ of
  $\mathcal{P}$ is the flip-flop monoid or a simple group.
  We have that $S$ is also the characteristic semigroup of $A$.
  By the converse of the Krohn-Rhodes decomposition theorem
  (Theorem~\ref{theorem:krohn_rhodes}), it follows that $S$ divides the
  semigroup of some $A_i$, which is the characteristic semigroup of some
  $\mathcal{T}_i$.
\end{proof}

\section{Proofs}

\label{sec:proofs-expressivity}

We provide the proofs of all our results in the same order as they are
stated in the main body.

\subsection{Proof of Theorem~\ref{theorem:expressivity_transformation_logics}}

\theoremexpressivitytransformationlogics*
\begin{proof}
  We prove the theorem in two parts.

  First, to show that every language recognised by a \wprogram{} is regular, it
  suffices to show that every \wprogram{} admits an equivalent automaton.
  This has already been shown in the proof of
  Theorem~\ref{theorem:prime_decomposition_of_transformation_logics_converse}.

  Then, to show that every regular language is captured by a transformation
  logic, it
  suffices to show that every automaton acceptor can be turned into a
  \wprogram{}.
  Let us consider an automaton $A = \langle \Sigma, Q, \delta, q_\mathrm{init},
  \Gamma, \theta \rangle$.
  Let $T$ consist of each transformation $\delta_\sigma(q) = \delta(q,\sigma)$
  for $\sigma \in \Sigma$.
  Let $\Sigma = \{ \sigma_1, \dots, \sigma_m \}$, and 
  let $Q = \{ q_1, \dots, q_n \}$.
  Let us consider $\sigma_1, \dots, \sigma_m$ and 
  $q_1, \dots, q_n$ also as propositional variables.
  Let $\phi(\vec{b})$ be the function that maps every one-hot vector 
  $\vec{b} \in \mathbb{B}^m$ to the corresponding transformation 
  $\delta_{\sigma_i}$.
  Let $\psi(\vec{b})$ be the function that maps every state $q \in Q$ to its
  one-hot encoding.
  Let us define the transformation operator 
  $\mathcal{T} = \langle Q, T, \phi, \psi \rangle$.
  We introduce the transformation \wrule{} $r_\mathrm{t}$ defined as
  \begin{align*}
    q_1, \dots, q_n \ruleif \mathcal{T}(\sigma_1, \dots, \sigma_m \mid
    q_\mathrm{init})
  \end{align*}
  and the static \wrule{} $r_\mathrm{out}$ defined as
  \begin{align*}
    \mathit{out} \ruleif \alpha
  \end{align*}
  where $\alpha$ is a propositional formula over variables $Q$ which
  is true iff $\theta(q) = 1$.
  Then we have that the \wprogram{} $P = \{ r_\mathrm{t}, r_\mathrm{out}\}$ is
  equivalent to the given automaton acceptor $A$ in the sense that,
  for every string $s = \sigma_{i_1} \dots \sigma_{i_\ell}$,
  we have that $A(s) = 1$ iff $(P, I, \ell) \models \mathit{out}$ where
  $I = \{ \sigma_{i_1} \}, \dots, \{ \sigma_{i_\ell} \}$.
\end{proof}

\subsection{Proof of Theorem~\ref{theorem:expressivity_transformation_logics_prime}}

\theoremexpressivitytransformationlogicsprime*
\begin{proof}
  We prove the theorem in two parts.

  First, the expressivity is within the regular languages by
  Theorem~\ref{theorem:expressivity_transformation_logics}.

  Then, again by Theorem~\ref{theorem:expressivity_transformation_logics}, we
  know
  that every regular language is recognised by some \wprogram{} $P$.
  By Theorem~\ref{theorem:prime_decomposition_of_transformation_logics}, we can
  replace every transformation
  \wrule{} in $P$ with a \wprogram{} over prime operators. By replacing all
  \wrule{}s
  with non-prime operators, we obtain the desired \wprogram{}.
\end{proof}

\subsection{Proof of Theorem~\ref{theorem:expressivity_kr_logics_starfree}}

\theoremexpressivitykrlogicsstarfree*
\begin{proof}
  We prove the theorem in two parts.

  First, we show that the expressivity of $\mathcal{L}(\mathcal{F})$ is
  contained in the star-free regular languages. 
  Consider a query $(P,q)$ of $\mathcal{L}(\mathcal{F})$, and let $L$ be the
  language it recognises.
  Following the construction given in the proof of
  Theorem~\ref{theorem:prime_decomposition_of_transformation_logics_converse},
  we can build a cascade $A$ of flip-flops that is equivalent to $(P,q)$, and
  hence it recognises $L$.
  By Lemma~C of Section~7.12 of \cite{ginzburg}, every language recognised by a
  cascade of flip-flops is star-free, and hence $L$ is
  star-free.
  Since $(P,q)$ is an arbitrary query of $\mathcal{L}(\mathcal{F})$, we conclude
  that the expressivity of $\mathcal{L}(\mathcal{F})$ is contained in the
  star-free regular languages. 

  Then, we show that the expressivity of $\mathcal{L}(\mathcal{F})$ contains
  the star-free regular languages.
  Consider a star-free regular language $L$.
  By the correspondence between star-free and noncounting regular
  languages \cite{schutzenberger1965finite}, there exists a noncounting
  automaton.
  By Lemma~A of Section~7.12 of \cite{ginzburg}, there exists a group-free
  automaton $A$, i.e., no non-trivial group divides $\mathbf{S}(A)$, the
  characteristic semigroup of $A$.
  Then there exists a \wprogram{} $P$ with operator $\mathcal{A}$ defined by 
  $\mathbf{S}(A)$ that recognises $L$---the construction of $P$ proceeds as in
  the proof of Theorem~\ref{theorem:expressivity_transformation_logics}.
  Then by Theorem~\ref{theorem:prime_decomposition_of_transformation_logics},
  there exists a \wprogram{} $P'$ whose
  only operator is $\mathcal{F}$.
  Since $L$ is an arbitrary star-free regular language, we conclude that 
  the expressivity of $\mathcal{L}(\mathcal{F})$ contains the star-free regular
  languages.
\end{proof}

\subsection{Proof of Theorem~\ref{theorem:groups_add_expressivity}}

\theoremgroupsaddexpressivity*
\begin{proof}
  First, we have that the expressivity of $\mathcal{L}(\mathcal{F}, \mathbf{G})$
  contains the expressivity of $\mathcal{L}(\mathcal{F})$.
  Then, to prove the theorem, it suffices to show
  a \wprogram{} of $\mathcal{L}(\mathcal{F}, \mathbf{G})$ that is not captured by 
  any \wprogram{} of $\mathcal{L}(\mathcal{F})$.

  Let us consider a group operator $\mathcal{G} \in \mathbf{G}$, and a
  transformation \wrule{} $r$ of the form 
  \begin{align*}
    p_1, \dots, p_n \ruleif \mathcal{G}(a_1, \dots, a_m \mid x_0).
  \end{align*}
  Then, let us consider an arbitrary \wprogram{} $P$ of $\mathcal{L}(\mathcal{F})$.
  Let us assume by contradiction that $P$ captures $\{ r \}$.
  By
  Theorem~\ref{theorem:prime_decomposition_of_transformation_logics_converse},
  we have that $\mathbf{S}(\mathcal{G})$ divides the flip-flop monoid $F$.
  We show this leads to a contradiction.
  We have that $\mathbf{S}(\mathcal{G})$ is a group $G$.
  By definition, when $G$ divides $F$, there exists a homomorphism $\psi$ from a
  subsemigroup of $F$ to $G$.
  Let $F = \{ a,b,e \}$ with $e$ the identity element.
  We have that $a \cdot s = s$ and $e \cdot a = a$.
  Thus $\psi(a) \cdot \psi(b) = \psi(a \cdot b) = \psi(b)$
  and
  $\psi(e) \cdot \psi(b) =  \psi(e \cdot b) = \psi(b)$.
  It implies that $\psi(b)$ does not admit an inverse in $G$, which is a
  contradiction since $G$ is a group. \todo{is it correct ??}
  We conclude that $P$ does not capture $\{ r \}$, and hence
  the expressivity of $\mathcal{L}(\mathcal{F}, \mathbf{G})$ properly includes
  the expressivity of $\mathcal{L}(\mathcal{F})$.
\end{proof}

\subsection{Proof of Theorem~\ref{theorem:expressivity_of_cyclic_operators}}

\theoremexpressivityofcyclicoperators*
\todo{there is a commented out proof idea}
\begin{proof}
  Consider a \wprogram{} $P$ of $\mathcal{L}(\mathcal{F}, \mathbf{S})$.
  Following the construction in the proof of
  Theorem~\ref{theorem:prime_decomposition_of_transformation_logics_converse},
  we can build an automaton $A$ that captures $P$ and whose semiautomaton is a
  cascade $C = (\phi_1, A_1) \ltimes \dots \ltimes (\phi_d, A_d)$ with $A_i$
  the flip-flop semiautomaton or a grouplike semiautomaton defined a solvable
  group $G \in \mathbf{S}$.
  In the latter case, we have that $G$ admits a composition series whose
  factors are cyclic groups of prime order. Let $\mathbf{C}_i$ denote such
  factors.
  By the Jordan–H\"{o}lder decomposition theorem
  (Theorem~\ref{theorem:jordan_holder}),
  $A_i$ is homomorphically represented by a semiautomata cascade 
  $(\phi_{i,1}, A_{i,1}) \ltimes \dots \ltimes (\phi_{i,1}, A_{i,n_1})$
  where every semiautomaton $A_{i,j}$ is defined by a cyclic group in
  $\mathcal{C}_i$.
  By replacing each such component in the cascade $C$, we obtain a cascade $C'$
  that homomorphically represents $C$ and where every semiautomaton is the
  flip-flop semiautomaton or it is defined by a cyclic group of prime order.
  By Proposition~1, there is an automaton $A'$ with semiautomaton $C'$ that is
  equivalent to $A$.
  Following the construction in the proof of
  Theorem~\ref{theorem:prime_decomposition_of_transformation_logics},
  we construct a \wprogram{} $P'$ that is equivalent to $A'$ and where every operator
  is a flip-flop operator or a cyclic operator of prime order. 
  In particular, we have that $P'$ captures $P$.
  Since $P$ is an arbitrary \wprogram{} of $\mathcal{L}(\mathcal{F}, \mathbf{S})$,
  we obtain that the expressivity of $\mathcal{L}(\mathcal{F}, \mathbf{S})$
  is contained in the expressivity of $\mathcal{L}(\mathcal{F}, \mathbf{C})$.
\end{proof}

\subsection{Proof of Theorem~\ref{theorem:expressivity_containment}}

\theoremexpressivitycontainment*
\begin{proof}
  We prove the theorem in two parts.

  First, we have that
  the expressivity of $\mathcal{L}(\mathcal{F},\mathbf{C}_1)$ equals the
  expressivity of $\mathcal{L}(\mathcal{F},\mathbf{C}_2)$ when  
  $\mathbf{C}_1 = \mathbf{C}_2$, since they are the same logic.

  For the second part, let us assume that $\mathbf{C}_1 \neq \mathbf{C}_2$.
  There exists an operator $\mathcal{C}_p$ for some prime number $p$ such that 
  $\mathbf{C}_1 \setminus \mathbf{C}_2$ or
  $\mathbf{C}_2 \setminus \mathbf{C}_1$.
  Let us assume the case $\mathbf{C}_2 \setminus \mathbf{C}_1$, and omit
  the proof of the other case since it will be symmetric.
  Specifically, we have $\mathcal{C}_p \in \mathbf{C}_2$ and
  $\mathcal{C}_p \notin \mathbf{C}_1$.
  Let $r$ be a \wrule{} of the form
  \begin{align*}
    p_1, \dots, p_n \ruleif \mathcal{C}_p(a_1, \dots, a_m \mid x_0).
  \end{align*}
  It suffices to show that there is no \wprogram{} of 
  $\mathcal{L}(\mathcal{F},\mathbf{C}_1)$ that captures $r$.
  Let us assume by contradiction that there is a \wprogram{} 
  $P'$ of $\mathcal{L}(\mathcal{F},\mathbf{C}_1)$ that captures $r$.
  By
  Theorem~\ref{theorem:prime_decomposition_of_transformation_logics_converse},
  it follows that the characteristic semigroup $C_p$ of $\mathcal{C}_p$ divides
  the flip-flop monoid $F$ or the characteristic semigroup of an operator in 
  $\mathbf{C}_1$.
  We exclude that $C_p$ divides $F$, as already argued in 
  the proof of Theorem~\ref{theorem:groups_add_expressivity}.
  Thus, let us consider the case where $C_p$ divides some cyclic group $C_q$,
  with $q \neq p$.
  We show it cannot be.

  There is a homomorphism from a subsemigroup $S$ of $C_q$ to $C_p$.
  It is a surjective function $\psi: S \to C_p$ such that $\psi(s_1 \cdot s_2)
  = \psi(s_1)  \cdot \psi(s_2)$ for every $s_1,s_2 \in S$.
  Let $g$ be a generator of $C_p$.
  Since $\psi$ is surjective, there exists $s \in S$ such that $\psi(s) = g$.
  We have that every element of $C_p$ is given by $g^n$ for $n \in [0, p]$.
  Equivalently it is of the form $\psi(s)^n = \psi(s) \cdot \psi(s) = \psi(s^n)$.
  Thus $s$ is the generator of a cyclic subgroup 
  $C \subseteq S \subseteq C_q$.
  The order of $C$ must divide the order of $C_q$, cf.\ Theorem~4.3 of
  \cite{gallian2021contemporary}.
  It follows that either $C = \{ e \}$ or $C = S = C_2$.
  In the first case, $C = \{ e \}$ implies that also $C_p = \{ e \}$, since 
  every element of $C_p$ is of the form $\psi(s^n) = \psi(e^n) = \psi(e) = e$.
  This is excluded because the order of $C_p$ is $p$, which is a prime number
  and hence it is different from one.
  In the second case, we have $C = S = C_q$.
  It follows that $C_p$ is a homomorphic image of $C_q$.
  Since $C_p$ is simple (i.e., it has no normal subgroup), it follows that 
  either $C_q$ is trivial or it is equal to $C_p$---see Page~7 of 
  \cite{domosi2005algebraic} or the end of Section~1.14 of \cite{ginzburg}.
  Both cases are excluded since we have assume $q \neq 1$ and $C_p \neq C_q$.
\end{proof}

\subsection{Proof of Theorem~\ref{theorem:non_canonicity}}

\theoremnoncanonicity*
\begin{proof}
  Let us consider $\mathbf{P}_1 = \{ \mathcal{F}, \mathcal{A}_5 \}$ and
  $\mathbf{P}_2 = \{ \mathcal{F}, \mathcal{A}_5,  \mathcal{C}_2 \}$ where
  $\mathcal{A}_5$ is the operator defined by the alternating group $A_5$, and
  $\mathcal{C}_2$ is the operator defined by the cyclic group of order two.
  Note that $\mathcal{C}_2$ and $\mathcal{A}_5$ are prime operators,
  since $C_2$ and $A_5$ are simple groups, cf.\
  \cite{gorenstein2018classification}.
  Note also that $C_2$ is a subgroup of $A_5$, cf.\ \cite{subgroupsofafive}.
  Namely, there is an isomorphism between a subgroup of $A_5$ and $C_2$,
  which implies that $C_2$ divides $A_5$.

  First, the expressivity of $\mathcal{L}(\mathbf{P}_1)$ is contained in 
  the expressivity of $\mathcal{L}(\mathbf{P}_2)$ since 
  $\mathbf{P}_1 \subset \mathbf{P}_1$.
  To show that the expressivity of $\mathcal{L}(\mathbf{P}_2)$ is contained in 
  the expressivity of $\mathcal{L}(\mathbf{P}_1)$, let us consider an arbitrary
  \wprogram{} $P$ of $\mathcal{L}(\mathbf{P}_2)$.
  It suffices to show that $P$ is captured by some \wprogram{} of 
  $\mathcal{L}(\mathbf{P}_1)$.
Following the construction of
  Theorem~\ref{theorem:prime_decomposition_of_transformation_logics_converse},
  we can give a cascade $A$ that captures $P$ where every semiautomaton is
  defined by a semigroup in $\{ F, A_5, C_2 \}$.
  Since $C_2$ divides $A_5$, we have that every semiautomaton define by $C_2$
  can be replaced with the semiautomaton defined by $A_5$, and obtain an
  equivalent cascade $A'$.
  Then, following the construction of
  Theorem~\ref{theorem:prime_decomposition_of_transformation_logics},
  from $A'$ we obtain a \wprogram{} that captures $P$ and has operators in
  $\mathcal{P}_1$.
\end{proof}

\subsection{Proof of Theorem~\ref{theorem:hierarchy}}

\theoremhierarchy*
\begin{proof}
  The strict containments when adding cyclic operators of prime order follow
  from Theorem~\ref{theorem:expressivity_containment}.
  The containment of the infinite hierarchy into 
  $\mathcal{L}(\mathcal{F}, \mathbf{S})$ follows from the fact that
  every cyclic group is solvable.
\end{proof}

\subsection{Proof of Theorem~\ref{theorem:complexity}}

\theoremcomplexity*
\begin{proof}
  For the first claim we provide an polynomial-time algorithm.
  Consider a query $(P,q)$ with $P \in \mathcal{L}$, and an input 
  $I = I_1, \dots, I_\ell$\/ for $P$.
  Let us recall that \wprogram{} $P$ has an associated acyclic dependency graph.
  In particular, there is one node for each variable, and each node has a depth,
  since the dependency graph is acyclic.
  The algorithm proceeds for increasing time $t$ and for increasing depth $n$,
  computing 
  (i) for all variables $p$ at depth $n$ whether they are true at time $t$, and
  (ii) for all transformation \wrule{}s $r$ at depth $n$ the associate element.
  Namely, it computes whether $(P,I,t) \models p$ holds, and the element $x$
  such that $(P,I,t) \models (r \mapsto x)$.
For $n=0$, variable $p$ is an input variable, and to check
  $(P,I,t) \models p$ it suffices to check $p \in I_t$.
  Let us consider $n>0$ and $t \geq 0$.
  We consider three separate cases, according to whether $p$ is defined by a
  static \wrule{}, a delay \wrule{}, or a transformation \wrule{}.
  In the case of a static \wrule{}, the algorithm has already computed the truth
  value at time $t$ for the variable in the \wbody\ of the \wrule{}.
  To determine the value of $p$, it suffices to evaluate the propositional
  formula in the \wbody{}, which can be done in polynomial time.
  In the case of a delay \wrule{}, if $t = 1$, the algorithm determines that the
  truth value of $p$ at time $t$ is false, and if $t \geq 1$, the algorithm
  determines the truth value of $p$ at time $t$ as the truth value of the
  \wbody\
  variable at time $t-1$, which has already been computed.
  In the case of a transformation \wrule{}, 
  the element $x$ is such that $(P,I,t) \models (r \mapsto x)$
  is given by $x = \tau(x')$ for $\tau = \phi(\mu)$ and 
  $(P,I,t) \models (a_1,\dots,a_m) \to \mu$ and 
  $(P,I,t-1) \models (r \mapsto x')$; we have that the truth of variables $a_i$
  has already been computed, as well as the value $x'$; the functions $\phi$ and
  $\tau$ can be evaluated in polynomial time since all operators are polytime.
  Furthermore, the value of the variable $p$ defined by the \wrule{} is given by
  $b_i$ as determined by $\psi(x) = \langle b_1, \dots, b_n \rangle$. We have
  that $\psi$ can be computed in polynomial time since all operators are
  polytime.
  Since there are at most $|P|$ \wrule{}s,
  the number of steps is $O(\ell \cdot |P| \leq |I| \cdot |P|)$.

  For the second claim, consider the family of operators 
  $\mathcal{H}_n = \langle \mathbb{B}, T, \phi, \mathit{id} \rangle$ where 
  the set $T$ of transformations is $\mathit{set}(x) = 1$
  and $\mathit{read}(x) = x$,
  and the function
  $\phi: \mathbb{B}^n \to \mathbb{B}$ returns $1$ iff
  its input is the binary encoding of a positive instance of the propositional
  logic programming of length $n$. 
  Note that propositional logic programming is \textsc{PTime}-complete, cf.\
  \cite{dantsin2001complexity}.
  Let $\mathbf{H} = \{ \mathcal{H}_n \mid n \geq 1 \}$.
  We show a logspace reduction $\psi$ from propositional logic programming to
  the evaluation problem of the logic $\mathcal{L}(\mathbf{H})$.
  This shows that the problem is \textsc{PTime}-hard, since 
  propositional logic programming is \textsc{PTime}-complete, cf.\
  \cite{dantsin2001complexity}.
  Consider a family of singleton \wprogram{}s $P_n$ consisting of the \wrule{}
  \begin{align*}
    q \ruleif \mathcal{H}_n(a_1, \dots, a_n).
  \end{align*}
  The reduction consists in mapping an instance $I$ of logic
  programming to the evaluation problem with input 
  $\psi(I) = \langle (P,q), I, 1 \rangle$.
  We have that $I$ is a positive instance of logic programming iff $\psi(I)$ is
  a positive instance of the evaluation problem of $\mathcal{L}(\mathbf{H})$.
  In fact, we have that $(P, I, 1) \models q$ iff $I$ is a positive instance of
  logic programming.
\end{proof}

\subsection{Proof of Lemma~\ref{lemma:complexity_finite_operators}}

\lemmacomplexityfiniteoperators*
\begin{proof}
  The two claims are proved separately.

  For the first claim.
  we have that every finite set of finite operators can be evaluated in constant
  time $O(1)$ since the size of an operator is bounded by a constant.
  
  For the second claim,
  consider the family of operators 
  $\mathcal{H}_n = \langle \mathbb{B}, T, \phi, \mathit{id} \rangle$ where 
  the set $T$ of transformations is $\mathit{set}(x) = 1$
  and $\mathit{read}(x) = x$,
  and the function
  $\phi: \mathbb{B}^n \to T$ returns $\mathit{set}$ iff
  its input is the binary encoding of a positive instance of the Datalog
  evaluation problem of length $n$. 
  Since Datalog is \textsc{ExpTime}-complete, cf.\ \cite{dantsin2001complexity},
  $\textsc{PTime} \subsetneq \textsc{ExpTime}$ by the time hierarchy theorem,
  and evaluating $\phi$ is necessary to determine the value of 
  $\psi(\tau(x))$ with $\tau = \phi(\mu)$ when given $x$ and $\mu$,
  we conclude that the family of operators $\mathcal{H}_n$ is not polytime.
\end{proof}

\subsection{Proof of Theorem~\ref{theorem:complexity_finite_operators}}

\theoremcomplexityfiniteoperators*
\begin{proof}
  By Lemma~\ref{lemma:complexity_finite_operators}, we have that $\mathbf{T}$ is
  polytime.
  Then the theorem follows by Theorem~\ref{theorem:complexity}.
\end{proof}

\subsection{Proof of Proposition~\ref{proposition:compact_operators}}

\propositioncompactoperators*
\begin{proof}
  Every operator $\mathcal{T}_n$ can be represented by encoding the symbol
  $\mathcal{T}$ with a constant number of bits and the index $n$ in binary with
  a number of bits that is $O(\log n)$. 
  The same argument applies to every operator $\mathcal{C}_n$.
\end{proof}

\subsection{Proof of Lemma~\ref{lemma:complexity_counters}}

\lemmacomplexitycounters*
\begin{proof}
  We deal with the two families of operators separately.
  We first discuss $\{ \mathcal{T}_n \}$.
  Let $m$ the minimum number of bits to encode $n$. Note that $m \in O(\log n)$.
We recall the definition of $\mathcal{T}_n$.
  We have $\mathcal{T}_n = \langle [0,n], T, \phi, \mathit{id} \rangle$
  where $T$ and $\phi$ are defined as follow.
  First, we define $T = \{ \mathit{inc}, \mathit{id} \}$ with $\mathit{inc}$
  defined as $\mathit{inc}(x) = \min(n, x + 1)$ and $\mathit{id}$ the identity
  function.
  Second, we define $\phi$ as $\phi(0) = \mathit{id}$ and 
  $\phi(1) = \mathit{inc}$.
  We have that the function $\phi$ is constant-time and all the other
  functions can operate in polynomial time over
  the binary representation of elements $x \in [0,n]$, and hence in time
  polynomial in the size of the operator, as required.

  Next we discuss $\{ \mathcal{C}_n \}$.
  We provide a definition of $\mathcal{C}_n$ which makes it clear that the
  operator is polytime.
  Let $m$ the minimum number of bits to encode $n-1$. 
  Note that $m \in O(\log n)$.
  Consider the function $\varphi_n: \mathbb{B}^m \to [0,n-1]$ defined as 
  \begin{align*}
    \varphi_n(b_1, \dots, b_m) = \min\left(n-1,\, \sum_{i=0}^{m-1} b_i \cdot 2^i
    \right)
  \end{align*}
  We can define $\mathcal{C}_n$ as 
  $\langle [0,n-1], T, \varphi_n, \mathit{id} \rangle$
  where $T = [0,n-1]$ with $\tau \in T$ defined as 
  $\tau(x) = x + \tau \operatorname{mod} n$.
  We have that the function $\varphi$ is polytime in $m$ and hence in the
  size of the operator, and that all the other functions can operate in
  polynomial time over the binary representation of elements $x \in [0,n]$, and
  hence in time polynomial in the size of the operator, as required.
\end{proof}

\subsection{Proof of Theorem~\ref{theorem:overall_complexity}}

\theoremoverallcomplexity*
\begin{proof}
  The families $\{ \mathcal{T}_n \}$ and $\{ \mathcal{C}_n \}$ are polytime
  by Lemma~\ref{lemma:complexity_counters}.
  The family $\mathbf{A}$ is polytime by
  Lemma~\ref{lemma:complexity_finite_operators}.
  We conclude that the family 
  $\mathbf{A} \cup \{ \mathcal{T}_n \} \cup \{ \mathcal{C}_n \}$ is
  polytime.
  Then, the theorem follows by Theorem~\ref{theorem:complexity}.
\end{proof}

\subsection{Proof of Theorem~\ref{theorem:constant_depth_flipflop}}

\todo{For all the proofs below, point out that the encoding/decoding functions
can be captured by AC0 circuits according to the following reasoning. 
The encoding/decoding functions operate over a fixed number of bits since the
set of operators is fixed. Hence such functions are captured by Boolean circuits
of constant size. Boolean circuits of constant size are clearly in AC0.}

\theoremconstantdepthflipflop*
\begin{proof}
  Given a query $(P,q)$ with $P \in \mathcal{L}(\mathcal{F}, k)$, it suffices to
  show a family of $\textsc{AC}^0$ circuits that computes $(P,I,\ell) \models q$
  for every input $I = I_1, \dots, I_\ell$ to $P$. Let us recall that an
  $\textsc{AC}^0$ circuit has `not' gates, unbounded fan-in `and' gates,
  unbounded fan-in `or' gates, and it has constant depth and polynomial size.
  For transformation \wrule{}s with flip-flop operator,
  an existing construction in the proof of Theorem~2.6 of
  \cite{chandra1985unbounded} shows how to construct an $\textsc{AC}^0$ circuit
  that computes the bit stored by a flip-flop at time $t$.
  Static \wrule{}s also correspond to $\textsc{AC}^0$ circuits since the
  \wprogram{} has
  constant-depth, and delay \wrule{}s correspond to the flip-flop that sets when
  the input is true and resets otherwise (i.e., it stores the previous bit), and
  hence admit the construction above.  
  Thus, considering that \wprogram{} $P$ has constant depth, 
  we can combine the aforementioned circuits for $t$ from $1$ to $\ell$ and
  obtain a circuit of constant depth and polynomial size whose output is $1$ on
  input $I$ if and only if $(P,I,\ell) \models q$.
  We provide a more detailed construction of the circuit below.

  \par\smallskip\noindent
  \emph{Details of the circuit.}
  Let $a_1, \dots, a_n$ be the input variables of the given 
  \wprogram{} $P$.
  The circuit has one input pin for the truth assignment of every variable
  at every time point $t \in [1,\ell]$ as specified by the input 
  $I = I_1, \dots, I_\ell$, hence it has input pins 
  $a_{1,1}, \dots, a_{n,\ell}$.
  The construction proceeds by introducing gates in order to ensure that,
  for every variable $p$ defined in $P$ and every time point $t \in
  [1,p]$, there is a gate $g_{p,t}$ whose output corresponds to 
  $(P,I,t) \models p$. 
  We proceed by considering a \wrule{} $r$ of $P$ such
  that all variables occurring in the \wbody\ of $r$ have their corresponding
  gates already included in the circuit---since programs have an acyclic
  dependency graph, such a \wrule{} always exists unless the circuit is already
  complete. 
  We consider three cases separately, according to what kind of \wrule{}
  $r$ is.

  First, let us consider the case when $r$ is a transformation \wrule{}.
  We have that $r$ is a transformation \wrule{} with flip-flop operator. Let it
  be of the following form.
  \begin{align*}
    p \ruleif \mathcal{F}(a,b \mid x_0)
  \end{align*}
  Here we adopt the construction from Theorem~2.6 of
  \cite{chandra1985unbounded}.
  We consider two cases separately.
  In the first case we have $x_0 = 0$.
  We extend the circuit built so far with the following circuits
  for $t$ ranging from $1$ to $\ell$.
  \begin{align*}
  g_{p,t} = \bigvee_{1 \leq i \leq t} 
  \left( g_{a,i} \land \bigwedge_{i < j \leq t} \neg g_{b,j} \right)
  \end{align*}
  Note that every $g_{a,i}$ and every $g_{b,j}$ is a gate already present in
  the circuit built so far.
  Intuitively, $p$ is true at time $t$ if there is a point $i$ at which the
  flip-flop has been set ($a$ is true) and it has not been reset at any of the
  following time points ($b$ is false).
  The case $x_0 = 1$ is handled similarly, except that the circuits to add are
  as follows:
  \begin{align*}
    g_{p,t} = \left( \bigwedge_{0 < j \leq t} \neg g_{b,j} \right) \lor
    \bigvee_{1 \leq i \leq t} \left( g_{a,i} \land \bigwedge_{i < j \leq t} \neg
    g_{b,j} \right)
  \end{align*}

  Second, let us consider the case when $r$ is a delay \wrule{}.
  Let the delay \wrule{} be of the following form.
  \begin{align*}
    p \ruleif \operatorname{\mathcal{D}} a
  \end{align*}
  We extend the circuit built so far with the following circuit where
  $\bot$ always evaluates to false,
  \begin{align*}
    g_{p,1} = \bot,
  \end{align*}
  and the following circuits for $t$ ranging from $2$ to $\ell$,
  \begin{align*}
    g_{p,t} = g_{a,{t-1}}.
  \end{align*}
  Intuitively, $p$ is false at time $1$, and its truth corresponds to the one of
  $a$ at the previous time point for time points in $[2,\ell]$.
  
  Third, let us consider the case when $r$ is a static \wrule{}.
  Let $b_1, \dots, b_m$ be the variables in the \wbody\ of $r$, and let $p$ be
  the variable defined by $r$.
  The \wbody\ of $r$ provides us with a circuit which we can copy $\ell$ times,
  with the $t$-th copy having inputs $g_{b_1,t}, \dots, g_{b_m,t}$. Then
  the $t$-th circuit has a gate that defines $g_{p,t}$.
\end{proof}

\subsection{Proof of Theorem~\ref{theorem:constant_depth_acc}}

\begin{definition}
  Let $S$ be a semigroup. The word problem for $S$ is the problem, for given
  $s_1, s_2, \dots, s_n \in S$, of computing the element
  $s_1 \cdot s_2 \cdots s_n$.
\end{definition}

The following theorem is Theorem~5 of \cite{barrington1989bounded}.
\begin{theorem}[Barrington, 1989]
  \label{theorem:reduction_of_solvable_groups}
  The word problem for any fixed solvable group $G$ is $\textsc{AC}^0$-reducible
  to the function $f_g(x) = x \operatorname{mod} g$, where $g$ is the
  order of $G$.
\end{theorem}

\theoremconstantdepthacc*
\begin{proof}
  We prove the theorem in two separate parts.

  For the first part, given a query $(P,q)$ with $P \in \mathcal{L}(\mathcal{F},
  \mathbf{S}, k)$, it suffices to
  show a family of $\textsc{ACC}^0$ circuits that computes $(P,I,t) \models q$
  for every input $I$ to $P$ and every time point $t$. 
  Let us recall that an $\textsc{ACC}^0$ circuit is an $\textsc{AC}^0$ circuit
  where also unbounded fan-in modulo-$g$ gates are allowed for any $g$, which
  compute the sum of their inputs modulo $g$.
  Using Theorem~\ref{theorem:reduction_of_solvable_groups}, 
  every transformation \wrule{} with solvable group operators is captured by
  an $\textsc{ACC}^0$ circuit.
  The other transformation \wrule{}s and the rest of the argument follow the
  proof of Theorem~\ref{theorem:constant_depth_flipflop}.
  We provide further details on the construction of the circuit below, after
  the proof for the second part.

  For the second part, let us consider the parity function 
  $f : \mathbb{B}^* \to \mathbb{B}$ that returns the parity of its input bits,
  i.e., their sum modulo 2.
  Let us consider the operator $\mathcal{C}_2$ defined by the cyclic group of
  order 2, and the \wrule{} $r$ defined as
  \begin{align*}
    p \ruleif \mathcal{C}_2(a \mid 1).
  \end{align*}
  Let $P = \{ r \}$.
  We have that, for an input $I$ to $P$, it holds that 
  $(\{r\}, I, t) \models p$ iff $a$ occurs an even number of times in $I$.
  Namely, \wprogram{} $P$ captures parity.
  Since parity is not in $\textsc{AC}^0$ \cite{furst1984parity}, 
  and $P \in \mathcal{L}(\mathcal{F}, \mathcal{C}_2, 1)$,
  we conclude that
  the evaluation problem of
  $\mathcal{L}(\mathcal{F}, \mathcal{C}_2, k)$ is not in $\textsc{AC}^0$.
  Note that $\mathcal{C}_2$ is a solvable group operator as required, since
  every cyclic group is solvable.

  \par\medskip\noindent
  \emph{Details of the circuit.}
  The circuit for the first part of the proof is built as in the proof of
  Theorem~\ref{theorem:constant_depth_flipflop} except for the case of a
  transformation \wrule{} featuring a group operator.
  We focus on this case next.
  Let us consider a solvable group $G = \langle X, \cdot \rangle$, 
  its corresponding solvable group operator 
  $\mathcal{G} = \langle X, T, \phi, \psi \rangle$,
  and a transformation \wrule{} featuring the operator.
  \begin{align*}
    p_1, \dots, p_n \ruleif \operatorname{\mathcal{G}}(q_1, \dots, q_m \mid x_0)
  \end{align*}
Theorem~\ref{theorem:reduction_of_solvable_groups} of
  \cite{barrington1989bounded} says that, given elements
  $x_0, x_1, \dots, x_t$ of $G$, one can compute the binary encoding of the 
  product element $x_0 \cdot x_1 \cdots x_t$ starting from the binary encodings
  of $x_0, x_1, \dots, x_t$ and using $\textsc{AC}^0$ circuitry and a
  modulo-$|X|$ gate. Note that \cite{barrington1989bounded} assumes fixed
  encodings, which is true in our case since every operator has its own encoding
  as determined by $\phi$ and $\psi$, and the set $\mathbf{G}$ of operators is
  fixed as well.
  Then we have that the circuit built so far has gates 
  $g_{q_1, t}, \dots, g_{q_m, t}$, for $t \in [1,\ell]$, that provide us a
  binary encoding of the element of $X$ that is associated with the
  transformation \wrule{} at time $t$.
  We add constant gates $g_{q_1, 0}, \dots, g_{q_m, 0}$ that provide us with
  the binary encoding of the initial element $x_0$.
  By Theorem~\ref{theorem:reduction_of_solvable_groups}, we can use
  $\textsc{AC}^0$ circuitry and a modulo-$|X|$ gate to obtain gates 
  $g_{p_1, t}, \dots, g_{p_n, t}$ for every $t \in [1,\ell]$, which are the
  required gates for variables $p_1, \dots, p_n$.
\end{proof}

\subsection{Proof of Theorem~\ref{theorem:complexity_nc}}

The following theorem is Theorem~4 of \cite{barrington1989bounded}.
\begin{theorem}[Barrington, 1989]
  \label{theorem:reduction_of_nonsolvable_groups}
  The word problem of any fixed non-solvable group is complete for
  $\textsc{NC}^1$ under $\textsc{AC}^0$ reductions.
\end{theorem}

\theoremcomplexitync*
\begin{proof}
  For membership in $\textsc{NC}^1$,
  by Theorem~\ref{theorem:reduction_of_nonsolvable_groups},
  we have that transformation \wrule{}s with non-solvable groups can be reduced to
  $\textsc{NC}^1$.
  Transformation \wrule{}s with solvable groups and flip-flops, as well as static
  and delay \wrule{}s can be reduced to $\textsc{ACC}^0 \subseteq \textsc{NC}^1$ as
  discussed in the proofs of Theorem~\ref{theorem:constant_depth_acc} and
  Theorem~\ref{theorem:constant_depth_flipflop}.

  The converse holds since we can $\textsc{AC}^0$-reduced every 
  $\textsc{NC}^1$ circuit to the word problem of a group
  \cite{barrington1989bounded},
  and we can express both the word problem and any 
  $\textsc{AC}^0$ circuit as a \wprogram{} of 
  $\mathcal{L}(\mathcal{F},\mathbf{G},k)$ for some $\mathbf{G}$ and some $k$.
\end{proof}

\subsection{Proof of Lemma~\ref{lemma:pltl_beforeop}}

\lemmapltlbeforeop*
\begin{proof}
  It follows from the fact that the semantics of delay \wrule{}s coincides with the
  semantics of the before operator.
\end{proof}

\subsection{Proof of Lemma~\ref{lemma:pltl_sinceop}}

\lemmapltlsinceop*
\begin{proof}[Proof idea]
  By induction on $t$, making use of the alternative, recursive definition of
  the semantics of the `since' operator.
\end{proof}
\begin{proof}
  Let $r_1$ be the \wrule{}
  \begin{align*}
    c \ruleif \neg a,
  \end{align*}
  and let $r_2$ be the \wrule{}
  \begin{align*}
    p \ruleif \mathcal{F}(b, c \mid 0).
  \end{align*}
  Let $I = I_1, \dots, I_\ell$.
  We proceed by induction on $t$ from $1$ to $\ell$.

  In the base case $t = 1$.
  We show the two implications separately.
  First, we show that
  $(I,1) \models (a \sinceop b)$ implies $(P,I,1) \models p$.
  We have that $(I,1) \models (a \sinceop b)$ implies $b \in I_1$, hence
  $(P,I,1) \models (r_2 \mapsto 1)$, and hence $(P,I,1) \models p$ as required.
  Second, we show that
  $(P,I,1) \models p$ implies $(I,1) \models (a \sinceop b)$.
  We have that 
  $(P,I,1) \models p$ implies (Point~5 of the semantics) that 
  $(P,I,1) \models (r_2 \mapsto 1)$, which in turn implies
  that (Point~6 of the semantics and definition of $\mathcal{F}$)
  $(P,I,1) \models b$, hence
  $b \in \sigma_1$, and hence $(I,1) \models (a \sinceop b)$.
  
  In the inductive case $t \geq 2$, and we assume 
  that $(I,t-1) \models (a \sinceop b)$ iff $(P,I,t-1) \models p$.
  We show the two implications separately.
  First, we show that
  $(I,t) \models (a \sinceop b)$ implies $(P,I,t) \models p$.
  We have that $(I,t) \models (a \sinceop b)$ implies 
  (i) $(I,t) \models b$ or 
  (ii) $(I,t-1) \models (a \sinceop b)$ and $(I,t) \models a$.
  The first case is analogous to the one in the base case.
  Let us consider the second case, by assuming 
  $(I,t-1) \models (a \sinceop b)$ and $(I,t) \models a$.
  Hence, $a \in I_t$, and by the inductive hypothesis
  $(P,I,t-1) \models p$.
  We have that 
  $(P,I,t-1) \models p$ implies that
  $(P,I,t-1) \models (r_2 \mapsto 1)$,
  and hence it suffices to show 
  $(P,I,t) \not\models c$, which holds since 
  $a \in I_t$.
  Now we show the other implication, i.e., 
  $(P,I,t) \models p$ implies $(I,t) \models (a \sinceop b)$.
  We assume $(P,I,t) \models p$.
  It follows that $(P,I,t) \models (r_2 \mapsto 1)$.
  By the definition of $\mathcal{F}$,
  it implies that one of the two following conditions holds:
  (i) $(P,I,t) \models b$ and hence $b \in I_t$, or
  (ii) $(P,I,t-1) \models (r_2 \mapsto 1)$ and $(P,I,t) \not\models c$ (which is
  equivalent to $a \in I_t$).
  We consider the two possible cases separately.
In the first case $b \in I_t$ implies $(I,t) \models (a \sinceop b)$.
  In the second case $(P,I,t-1) \models p$.
  By the inductive hypothesis we have 
  $(I,t-1) \models (a \sinceop b)$, which implies
  $(I,t) \models (a \sinceop b)$ since 
  $a \in I_t$.

  This concludes the proof.
\end{proof}

\subsection{Proof of Theorem~\ref{theorem:pltl_translation}}

\theorempltltranslation*
\begin{proof}
  Consider a Past LTL formula $\varphi$.
  We show a query $(P,q)$ such that, 
  for every interpretation $I$ of $\varphi$ and every time point $t$,
  it holds that $(I, t) \models \varphi$ iff 
  $(P, I, t) \models q$.
  We proceed by induction on the depth of a parse-tree of $\varphi$.
  In the base case the depth is zero, the parse-tree consists of the root only,
  and $\varphi = a$ where $a$ is a propositional variable.
  The corresponding query is $(P,q)$ where $q$ is a fresh variable and $P$
  consists of the \wrule{} $q \ruleif a$.

  In the inductive case the depth $d$ of a parse-tree of $\varphi$ is greater
  than zero, and we assume that we can build a \wprogram{} for every formula whose
  parse-tree has depth at most $d-1$.
  We proceed by cases, according to the top-level operator of $\varphi$, the one
  associated with the root.
  In the case of $\lor$, 
  given the queries $(P_\alpha, a_\alpha)$ and $(P_\beta, a_\beta)$ for the
  formulas $\alpha$ and $\beta$ associated with the children of the root, 
  the corresponding query is $(P,q)$ where $q$ is a fresh variable and $P$ is 
  $P_\alpha \cup P_\beta$ extended with the \wrule{}
  $q \ruleif a_\alpha \lor a_\beta$.
  Similarly for $\land$.
  In the case of $\neg$, given the query $(P_\alpha, a_\alpha)$ for the formula
  $\alpha$ associated with the child of the root, and the resulting query is
  $(P,q)$ where $q$ is a fresh variable and $P$ is $P_\alpha$ extended with the
  \wrule{} $q \ruleif \neg a_\alpha$.
  In the case of $\beforeop$, given the query $(P_\alpha, a_\alpha)$ for the
  formula $\alpha$ associated with the child of the root, the resulting query is
  $(P,q)$ where $q$ is a fresh variable and $P$ is $P_\alpha$ extended with the
  \wrule{} $q \ruleif \operatorname{\mathcal{D}} a_\alpha$.
  In the case of $\sinceop$, given the queries
  $(P_\alpha, a_\alpha)$ and $(P_\beta, a_\beta)$ for the formulas $\alpha$ and
  $\beta$ associated
  with the children of the root, the resulting query is $(P,q)$ where $q$ is a
  fresh variable and $P$ is $P_\alpha \cup P_\beta$ extended with the \wrule{}s
  $q \ruleif \mathcal{F}(a_\beta, c \mid 0)$ and $c \ruleif \neg a_\alpha$.

  Correctness of the construction for the Boolean operators follows from the
  fact that their semantics in the Transformation Logics is the standard one,
  and hence it coincides with their semantics in Past LTL.
  Correctness for the `before' operator is by Lemma~\ref{lemma:pltl_beforeop}.
  Correctness for the `since' operator is by Lemma~\ref{lemma:pltl_sinceop}.
The size of the resulting \wprogram{} is linear since we introduce one \wrule{} for
  every operator of $\varphi$, and every \wrule{} has a constant number of
  variables.
The translation requires only to keep a constant number of pointers and
  variable names at the time, and hence it can be carried out in logarithmic
  space.
\end{proof}

 \fi

\end{document}